\documentclass{article}

\usepackage{microtype}
\usepackage{graphicx}
\usepackage{subcaption}
\usepackage{booktabs} % for professional tables

\usepackage{hyperref}
\usepackage[dvipsnames]{xcolor}

\usepackage{soul}

\newcommand{\remove}[1]{}

\usepackage[accepted]{icml2026}

\usepackage{amsmath}
\usepackage{amssymb}
\usepackage{mathtools}
\usepackage{amsthm}

\usepackage[capitalize,noabbrev]{cleveref}

\theoremstyle{plain}
\newtheorem{theorem}{Theorem}[section]

\newtheorem{lemma}[theorem]{Lemma}

\theoremstyle{definition}
\newtheorem{definition}[theorem]{Definition}

\theoremstyle{remark}

\usepackage[textsize=tiny]{todonotes}
\usepackage{algorithm}
\usepackage{algorithmic}
\usepackage{bbding}
\usepackage{amssymb,amsmath,amsthm}
\renewcommand{\algorithmicrequire}{\textbf{Input:}}
\renewcommand{\algorithmicensure}{\textbf{Output:}}

\usepackage{multicol,multirow}
\usepackage{listings}
\definecolor{dkgreen}{rgb}{0,0.6,0}
\definecolor{customgray}{rgb}{0.25,0.25,0.25}
\definecolor{customred}{rgb}{0.8,0.05,0.05}
\definecolor{customblue}{rgb}{0.05,0.05,0.8}
\usepackage{bbding}
\usepackage{comment}
\usepackage{bm}

\newcommand{\tb}[1]{\textcolor{blue}{ {#1}}}
\newcommand{\indep}{\mathop{\perp\!\!\!\!\perp}}

\definecolor{dkred}{rgb}{0.8,0,0}
\definecolor{dkpink}{rgb}{1.0,0,0.5}
\definecolor{dkgreen}{rgb}{0,0.4,0}
\definecolor{tickgreen}{rgb}{0,0.6,0}

\icmltitlerunning{Credit-assigned Policy Gradient for Early Stage Retrieval in Two-stage Ranking}

\begin{document}

\twocolumn[
  \icmltitle{Credit-assigned Policy Gradient for  \\
    Early Stage Retrieval in Two-stage Ranking}

  % It is OKAY to include author information, even for blind submissions: the
  % style file will automatically remove it for you unless you've provided
  % the [accepted] option to the icml2026 package.

  % List of affiliations: The first argument should be a (short) identifier you
  % will use later to specify author affiliations Academic affiliations
  % should list Department, University, City, Region, Country Industry
  % affiliations should list Company, City, Region, Country

  % You can specify symbols, otherwise they are numbered in order. Ideally, you
  % should not use this facility. Affiliations will be numbered in order of
  % appearance and this is the preferred way.
  \icmlsetsymbol{equal}{*}

  \begin{icmlauthorlist}
    \icmlauthor{Haruka Kiyohara}{Cornell}
     \icmlauthor{Mihaela Curmei}{Meta}
    \icmlauthor{Ariel Evnine}{Meta}
    \icmlauthor{Shankar Kalyanaraman}{Meta}
    \icmlauthor{Israel Nir}{Meta}
    \icmlauthor{Ana-Roxana Pop}{Meta}
    \icmlauthor{Nitzan Razin}{Meta}
    \icmlauthor{Sarah Dean}{Cornell}
    \icmlauthor{Thorsten Joachims}{Cornell}
    \icmlauthor{Udi Weinsberg}{Meta}
  \end{icmlauthorlist}

  \icmlaffiliation{Cornell}{Computer Science Department, Cornell University, Ithaca, NY, USA}
  \icmlaffiliation{Meta}{Central Applied Science, Meta, Menlo Park, CA, USA}

  \icmlcorrespondingauthor{Haruka Kiyohara}{hk844@cornell.edu}
  % \icmlcorrespondingauthor{Firstname2 Lastname2}{first2.last2@www.uk}

  % You may provide any keywords that you find helpful for describing your
  % paper; these are used to populate the "keywords" metadata in the PDF but
  % will not be shown in the document
  \icmlkeywords{Reinforcement Learning, Policy Gradients, Two-stage Ranking, Recommender Systems, Credit-Assignment}

  \vskip 0.3in
]

% this must go after the closing bracket ] following \twocolumn[ ...

% This command actually creates the footnote in the first column listing the
% affiliations and the copyright notice. The command takes one argument, which
% is text to display at the start of the footnote. The \icmlEqualContribution
% command is standard text for equal contribution. Remove it (just {}) if you
% do not need this facility.

% Use ONE of the following lines. DO NOT remove the command.
% If you have no special notice, KEEP empty braces:
\printAffiliationsAndNotice{}  % no special notice (required even if empty)
% Or, if applicable, use the standard equal contribution text:
% \printAffiliationsAndNotice{\icmlEqualContribution}

\begin{abstract}
Large-scale search, recommendation, and retrieval-augmented generation (RAG) systems typically employ a two-stage architecture: an early-stage ranker (ESR) generates a candidate set, which is subsequently re-ranked by a late-stage ranker (LSR). While there are many reinforcement learning (RL) methods for training the LSR, end-to-end training of the ESR has proven challenging. In particular, naive application of ``vanilla'' policy gradient (V-PG) is not scalable for candidate-set sizes relevant for practical use due to exploding variance.
This issue arises because V-PG propagates the gradient to the joint probability of the candidate sets, ignoring the contribution of each specific item in the candidate set to the reward.
To mitigate this issue, we propose a novel \textbf{``credit-assigned'' policy gradient (CA-PG)}, which computes gradients with respect to the probability that the target item is chosen in any candidate set, i.e. marginalizing over all candidate sets that contain it. 
Our theoretical analysis reveals that CA-PG significantly reduces the variance of V-PG by marginalizing over the specific composition of the candidate set, while preserving the ability to learn the correct ranking of items under a reasonably aligned LSR policy. Experiments on both synthetic and real-world data demonstrate that CA-PG 
improves the convergence speed and training stability for ESRs utilizing the canonical Plackett-Luce model, especially when the candidate-set size is large.
\end{abstract}

\section{Introduction}

Reinforcement learning (RL) via policy gradient methods~\citep{sutton1999reinforcement} has proven to be a powerful tool for finding decision-making policies that maximize 
an objective of interest, a.k.a. reward.
Successful applications of policy gradient include a wide range of problems, such as the training of large language models (LLMs)~\citep{rafailov2023direct, shao2024deepseekmath}, retrieval-augmented generation (RAG)~\citep{huang2020personalized}, information retrieval (IR)~\citep{gao2023policy}, and content recommendation and ranking~\citep{chen2019top, oosterhuis2021computationally}. These applications require us to work with larger action (item) spaces and more complex models as technology evolves.

\begin{figure}
\centering
\includegraphics[clip, width=0.96\linewidth]{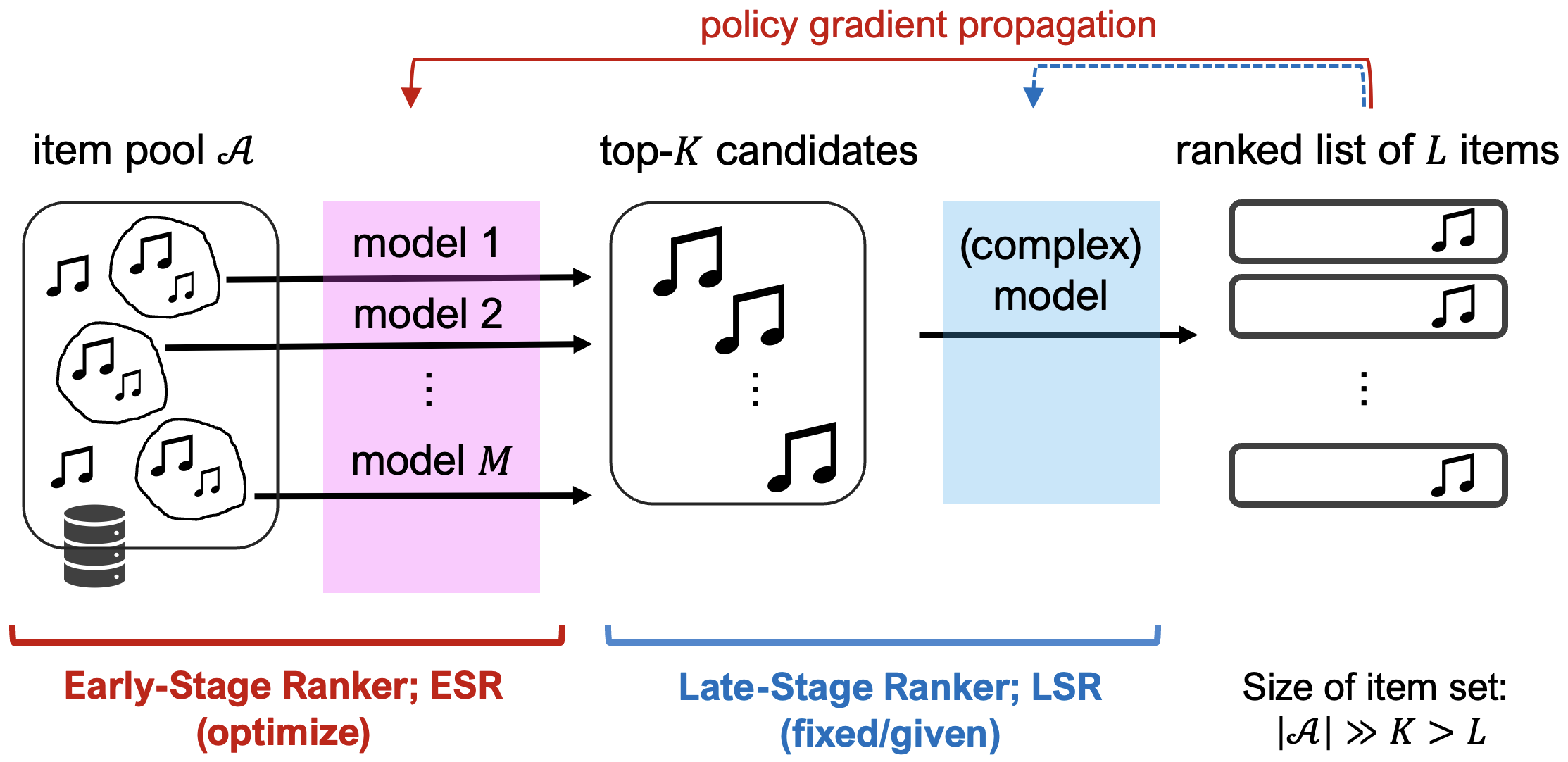}
  % \vspace{-3mm}
  \caption{\textbf{Illustration of the two-stage ranking problem in a large-scale recommender system.}  
  }
  \label{fig:two_stage_rec}
\end{figure}

In particular, modern recommender systems have to process a massive volume of items at web latency. By necessity, these systems handle this scale through two main stages: an early-stage ranker (ESR), which filters up to billions of items cheaply into a smaller candidate set, and a more expensive late-stage ranker (LSR) that generates the final set of recommendations from this candidate set, as illustrated in Figure~\ref{fig:two_stage_rec}. 
An effective ESR is of crucial importance, since the LSR policy can only be successful if the candidate set contains the relevant items~\citep{evnine2024achieving}.

However, while we have many methods for training the LSR policy, we lack principled approaches for end-to-end training of the ESR policy to make the overall retrieval pipeline maximally effective. 
Moreover, the LSR in existing pipelines often combines the results from multiple ESR models to make sure a diverse set of items is available to the LSR policy (e.g., mixture-of-experts; MoE). 
This makes the design of effective 
training methods for the 
ESR
particularly challenging ~\citep{hron2021component}. 

\textbf{This paper proposes a principled and efficient PG approach for training ESR policies, including those consisting of mixtures-of-experts (MoE).} We begin the analysis by investigating the naive approach for the ESR training, called ``vanilla'' (V-PG)~\citep{ma2020off}. V-PG derives the exact PG of ESR w.r.t. the probability of choosing the candidate set sampled by the ESR policy. 
We observed
that V-PG has a significant learning instability (i.e., gradient overflow) issue, especially when we increase the candidate-set size ($K$) and the number of actions ($|\mathcal{A}|$). 
This is because the \textbf{\textit{variance}} of V-PG increases exponentially in $\mathcal{O}(|\mathcal{A}|^K)$ as the action space of V-PG, 
% denoted as $\mathcal{A}^K$, 
is combinatorial. Because we select only one candidate set $\mathcal{A}^K$ for each query in the user interaction, this large action space makes it difficult to accurately estimate the gradient. Moreover, we also identified a \textbf{\textit{credit-assignment}} issue in the V-PG, which is closely related to the variance problem. The credit-assignment issue happens because V-PG aims to increase the joint probability of $\mathcal{A}^K$, where all actions in the candidate set are equally valued, thus ignoring which specific action contributed to the final reward. For example, consider the case where there are three actions, $A, B, C$, selected in the candidate set, and the actions $A$ and $B$ are presented in the final ranking selected by the LSR. Suppose that we get a positive reward for $A$ and a 
negative reward for $B$ as their position-wise reward. Intuitively, we expect that the probability of selecting $A$ in the candidate set should be increased, while the probability of selecting $B$ should be decreased by the ESR policy. However, V-PG increases the \textit{joint} probability of selecting all members of $(A, B, C)$ when $A$ receives a high reward. This \textit{credit-assignment} issue has been overlooked.

Our 
theoretical 
analysis reveals that this inefficient propagation of V-PG to irrelevant actions is the key source of the variance (Theorem~\ref{thrm:residual}), leading to learning instability and slow convergence. We overcome this issue 
by introducing a variance-reduced alternative called \textbf{\textit{``credit-assigned''} policy gradient (CA-PG)}. The key point of CA-PG is that the gradient is calculated w.r.t.
the probability that action $a$ is selected in \emph{any} candidate set, i.e. the marginal probability of $\mathcal{S}^K(a) := \{ \mathcal{A}^K | a \in \mathcal{A}^K \}$, instead of 
% increasing 
the probability of selecting a \textit{specific} candidate set $\mathcal{A}^K$.
CA-PG 
is not an \textit{unbiased} estimation of the true ESR's policy gradient due to marginalization. 
However, we show that CA-PG can learn to select the optimal candidate, as long as the LSR policy is \textit{``reasonably aligned''} with the true rank of actions,
% \sk{(can we add a few words here to say what we mean by ``reasonably aligned''?)}, 
including arbitrary interpolation between the optimal and uniform random policies (Theorem~\ref{thrm:sufficient}). These results indicate that ignoring distinctions between specific candidate sets can have more benefits than harms due to substantial variance reduction under the use of a practical LSR policy, achieving a better bias-variance tradeoff when the sample size is small. 
% \tb{[TODO] want to make one line shorter}

We turn from statistical to computational efficiency. Specifically, we discuss the computation of the CA-PG's marginal probability (of $\mathcal{S}^K(a)$) for the commonly-used Plackett-Luce (PL) ESR policy~\citep{oosterhuis2021computationally},  
with and without the ``sampling w/ replacement'' (SwR) approximation. Our analysis reveals that CA-PG-SwR reduces the computational cost of CA-PG from $\mathcal{O}(KL)$ to $\mathcal{O}(L)$ in the special case of using a single logit model (i.e., not mixture-of-experts (MoE) candidate retrievers), where $L$ is the length of LSR output. This ``TOP1-PG'' computes the simple probability of the target action selected as the top-1 action by ESR, providing a computationally efficient alternative to CA-PG. 
Our finding that TOP1-PG can be sufficient is a valuable insight, as a single logit model-based PL is still widely adopted in practical applications~\citep{chen2019top,ma2020off}. However, the combination of MoE and CA-PGs can yield a further expected reward improvement compared to TOP1-PG, due to more efficient exploration. This suggests an interesting tradeoff between computational efficiency and performance in the design of the ESR policy.

Finally, we compared CA-PG and TOP1-PG 
% (CA-PG-SwR) 
to V-PG and V-PG-SwR on both synthetic and real data, using the KuaiRec dataset~\citep{gao2022kuairec}. The results demonstrate that CA-PG consistently improves the training stability over V-PG, by avoiding the gradient overflow of the policy gradient. Moreover, we also observe that both CA-PG and CA-PG-SwR converge faster than V-PG-SwR, especially when the size of candidates ($K$) is large, requiring minimal interaction to identify user interests.

Our contributions are summarized as follows.
\begin{itemize}
    \item We discuss the tractable end-to-end training of ESR to maximize users' response signal.
    \item We find that ``vanilla'' policy gradient has a high-variance issue, associated with the credit-assignment problem.
    \item We propose a ``credit-assigned'' policy gradient to reduce variance, while keeping the ability to correctly align items under a reasonably aligned late-stage policy.
    \item We provide a theoretical analysis formally characterizing the aforementioned variance and alignment properties, alongside a comparison of computational complexity.
    \item We empirically demonstrate that CA-PG improves the learning stability and convergence of V-PG when the candidate-set size is large.
\end{itemize}

\section{Policy Gradients for ESR Training}
We begin by formulating two-stage ranking as a contextual bandit problem. Let $x \in \mathcal{X}$ be a user context in the context space $\mathcal{X}$, such as a user's demographic profile. Let $a \in \mathcal{A}$ be an item (e.g., a document, product, or piece of content) where $\mathcal{A}$ is the original (large) pool of items. In applications like search or recommendation, we aim to present a ranked list of items (without duplicates) $a_{1:L} := (a_1, a_2, \cdots, a_l, \cdots, a_L)$ to the users, and receive item-wise user feedback (i.e., reward) $r_{1:L} := (r_1, r_2, \cdots, r_l, \cdots, r_L)$. 
The aggregated (ranking-wise) reward is expressed as $r^{\ast} = \sum_{l=1}^L \alpha_l r_l$, where $\alpha_l \, (> 0)$ is a pre-specified non-negative weight for each position\tb{~\citep{kiyohara2022doubly}}. For example, a simple sum of rewards is expressed as $\forall l \in [L], \, \alpha_l = 1$, and a representative information retrieval metric called Discounted Cumulative Gain (DCG) is expressed as $\forall l \in [L], \, \alpha_l = 1 / \log_2(l + 1)$~\citep{jarvelin2002cumulated}. We can also implicitly model the learning-to-rank (LTR) setting where $\alpha$ can be seen as the \textit{position bias}, representing the position-dependent item observation probability~\citep{joachims2017unbiased}.
Given a policy $\pi$, which determines which item ranking should be presented to users, we aim at maximizing the following policy value as our objective function:
\begin{align*}
    V(\pi) := \mathbb{E}_{p(x)\pi(a_{1:L}|x)p(r_{1:L}|x,a_{1:L})}[r^{\ast}],
\end{align*}
where $p(x)$ and $p(r_{1:L}| x, a_{1:L})$ are the user and reward distributions respectively, and $\pi(a_{1:L}|x)$ is the probability of choosing the item ranking $a_{1:L}$ given the user context $x$. As large-scale recommender systems need to handle a large number of items, these systems often employ a \textbf{\emph{two}}-stage policy combination. The (simple) \textbf{early stage ranking (ESR)} policy
% , which may consist of multiple sub-retrievers, 
first filters the \textit{candidate set}, and the (complex) \textbf{late-stage ranking (LSR)} policy generates the ranked list of items from the candidate set.
These policies are modeled as the stochastic sampling process of the candidate set $\mathcal{A}^K \sim \pi_{\mathrm{ESR}}(\mathcal{A}^K|x)$ and final ranking $a_{1:L} \sim \pi_{\mathrm{LSR}}(a_{1:L} | x, \mathcal{A}^K)$, where $\mathcal{A}^K$ is the (unordered) set of $K$ different items. This paper studies PG methods for the ESR policy ($\pi_{\mathrm{ESR}}$) given an existing, fixed LSR ($\pi_{\mathrm{LSR}}$), to improve the overall quality of the joint policy ($\pi$).

Following existing work~\citep{gao2023policy}, we assume that the reward at each position $l$ depends only on the corresponding action, i.e., $r_l \sim p(r_l | x, a_l)$ and denote $q(x, a_l) = \mathbb{E}[r_l | x, a_l]$.
This \textit{``independence''} assumption means that there is no item-to-item interaction, such as diversity effects, in the users' reward signal.
This allows us to decompose the expected ranking-wise reward into position-wise expected rewards as $\mathbb{E}[r^{\ast} | x, a_{1:L}] = \sum_{l=1}^L \alpha_l \, q(x, a_l)$. Therefore, the policy value is also decomposed into $V(\pi) = \sum_{l=1}^L \alpha_l v_l(\pi)$, where 
\begin{align}
    v_l(\pi) := \mathbb{E}_{p(x)\pi^{(l)}(a_l | x)}[q(x, a_l)] 
    \label{eq:independence_assm}
\end{align}
and $\pi^{(l)}(a_l | x) = \sum_{a_{1:L}'} \mathbb{I} \{ a_l = a_l' \} \pi(a_{1:L}' | x)$ is the marginal probability of observing the action $a_l$ at the position $l$. Because $\nabla V(\pi) = \sum_{l=1}^L \alpha_l \nabla v_l(\pi)$, in the rest of the paper, we 
% mainly 
discuss how to estimate the PG of $v_l(\pi)$.

\begin{figure*}
\centering
\includegraphics[clip, width=0.90\linewidth]{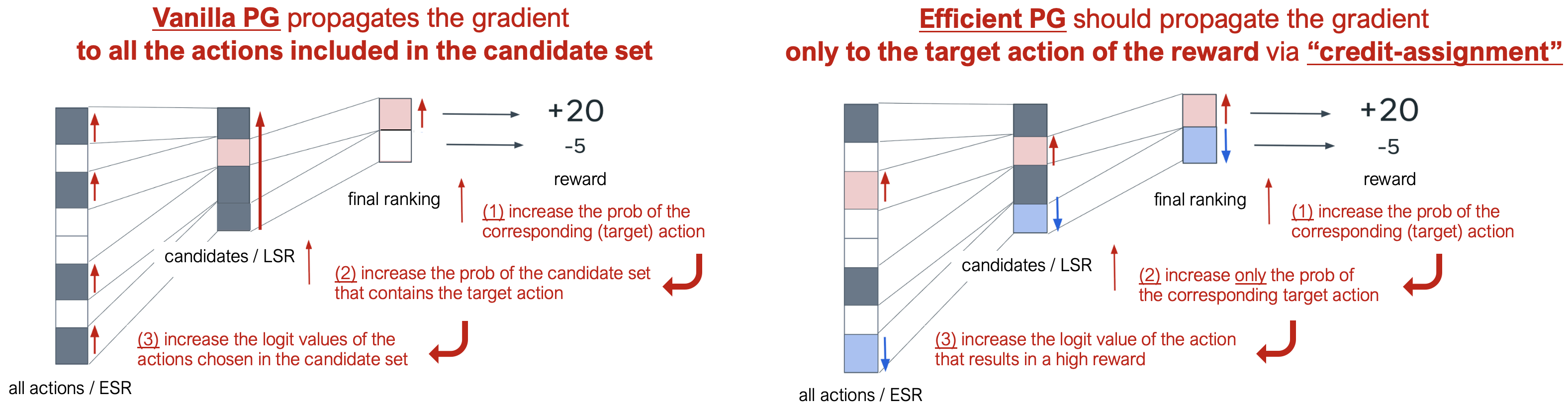}
  % \vspace{-3mm}
  \caption{\textbf{(Left) Credit-assignment issue of V-PG and (Right) Concept of the \textit{``credit-assigned''} PG.}} 
  \label{fig:credit_assignment}
\end{figure*}

\subsection{(Vanilla) Policy Gradient for Two-stage Policy}
From the (general) PG theorem~\citep{williams1992simple}, the PG of the joint policy for Eq.~\eqref{eq:independence_assm} is given as follows:
\begin{align}
    \nabla v_l(\pi) 
    &= \mathbb{E}_{\mathcal{D}}[ \nabla \log \pi(a_{1:L}|x) r_l ] \nonumber \\
    &= \mathbb{E}_{p(x)\pi^{(l)}(a_l | x)p(r_l | x, a_l)}[ \nabla \log \pi^{(l)}(a_l |x) r_l ] 
    % \nonumber \\
    % &(= \mathbb{E}_{\mathcal{D}}[ \nabla \log \pi^{(l)}(a_l |x) r_l ] ), 
    \label{eq:joint_pg}
\end{align}
where $\mathcal{D} := p(x)\pi(a_{1:L} | x)p(r_{1:L} | x, a_{1:L})$ is the original data generation process. The second equality follows from the independence assumption $r_l \indep a_{1:L(\neq l)} | x, a_l$. We aim to calculate the gradient of ESR given a static LSR. The joint policy is decomposed into ESR and LSR as follows,
\begin{align}
    \pi^{(l)}(a_l | x) 
    % = \sum_{\mathcal{A}^K} \pi(\mathcal{A}^K, a_{1:L} | x) 
    = \textstyle \sum_{\mathcal{A}^K} \pi_{\mathrm{ESR}}(\mathcal{A}^K | x) \pi_{\mathrm{LSR}}^{(l)}(a_l | x, \mathcal{A}^K), \label{eq:naive_policy_decomposition}
\end{align}
where $\pi_{\mathrm{LSR}}^{(l)}(a_l | x, \mathcal{A}^K)$ is the marginal probability of the LSR policy choosing the action $a_l$ at the position $l$, defined as $\sum_{a_{1:L}'} \mathbb{I} \{ a_l = a_l' \} \pi_{\mathrm{LSR}}(a_{1:L}' | x, \mathcal{A}^K)$.
Combining Eqs.~\eqref{eq:joint_pg} and \eqref{eq:naive_policy_decomposition} for a fixed LSR policy, the ESR's PG becomes as follows:
\begin{align}
    \nabla v_l(\pi) = \mathbb{E}_{\mathcal{D}'
    % p(x)\underline{\pi_{\mathrm{ESR}}(\mathcal{A}^K | x)\pi_{\mathrm{LSR}}^{(l)}(a_l | x, \mathcal{A}^K)}p(r_l | x, a_l)
    }[\nabla \log \pi_{\mathrm{ESR}}(\mathcal{A}^K | x) \cdot r_l ],  \label{eq:naive_pg}
\end{align}
where we denote the new data generation process as $(\mathcal{D}')$, which is the joint distribution of $\mathcal{A}^K$ and $a_{1:L}$, i.e.,   $p(x)\underline{\pi_{\mathrm{ESR}}(\mathcal{A}^K|x)\pi_{\mathrm{LSR}}(a_{1:L}|x,\mathcal{A}^K)}p(r_{1:L} | x, a_{1:L})$.\footnote{We provide all derivations in this paper in Appendix~\ref{app:derivation}.}

\subsection{Variance and Credit-assignment Issues of V-PG}
To train the ESR model via gradient ascent, the expectation in the policy gradient (Eq.~\eqref{eq:naive_pg}) must be estimated from empirical data. In online policy learning, we sample $n$ tuples per batch $\{(x_i, \mathcal{A}^K_i, a_{i,l}, r_{i,l})\}_{i=1}^n$ from the current policy $\pi$, observing only a single candidate set $\mathcal{A}^K_i$ and $l$'th position item $a_{i,l}$ per context $x_i$. Because the action space $\mathcal{A}^K$ grows exponentially with $K$, unbiased estimators of the PG suffer from extreme variance; thus, managing the bias-variance tradeoff is critical for fast and stable learning.

The main limitation of the aforementioned ``vanilla'' policy gradient (V-PG) (Eq.~\eqref{eq:naive_pg})  is \textit{\textbf{variance}} when using empirical estimation from sampled data.
Specifically, because the action space of the candidate retrieval ($\mathcal{A}^K$) becomes exponentially large as the size of the candidate ($K$) becomes large, more samples are required to estimate PG accurately. 
Such a variance issue can result in learning instability or slow convergence of the policy.

Moreover, taking a closer look at the source of variance reveals that the variance 
stems from the \textit{\textbf{credit-assignment}} issue of V-PG. As illustrated in Figure~\ref{fig:credit_assignment} (Left), V-PG (Eq.~\eqref{eq:naive_pg}) aims to increase the \textit{joint probability} of all the actions selected in the candidate set ($\mathcal{A}^K$), regardless of which action ($a_l$) was selected by LSR and resulted in a high reward value. 
This means that V-PG needs to take an expectation over the combinatorial space of all candidate sets $\mathcal{A}^K$ to derive the gradient of each specific action $a_l$. As we need to use Monte-Carlo (MC) sampling to estimate the expectation over the large action space of $\mathcal{A}^K$, V-PG exponentially increases sample complexity as $K$ grows.
 
Instead, it might be more desirable to propagate the gradient only to the corresponding action in the candidate set, as illustrated in Figure~\ref{fig:credit_assignment} (Right). In the next section, we study how to enable such ``credit-assigned'' policy gradient and discuss whether the credit attribution can resolve the variance problem of V-PG.

\section{Proposal: Credit-assigned Policy Gradient}
The key ingredient for enabling the \textbf{\textit{``credit-assigned''} PG} is a new decomposition of the policy that does not depend on specific samples of $\mathcal{A}^K$. Specifically, we consider the following policy decomposition:
\begin{align*}
    \pi^{(l)}(a_l | x) := \pi_{\mathrm{ESR}}(\mathcal{S}^K(a_l) | x) \pi^{(l)}(a_l | x, \mathcal{S}^K(a_l))
\end{align*}
where $\mathcal{S}^K(a_l) := \{ \mathcal{A}^K \,|\, a_l \in \mathcal{A}^K \}$ is the set of candidate sets $\mathcal{A}^K$ that contain $a_l$ as one of the members. Here, $\pi_{\mathrm{ESR}}(\mathcal{S}^K(a_l) | x) := \sum_{\mathcal{A}^K} \mathbb{I} \{ a_l \in \mathcal{A}^K \} \pi_{\mathrm{ESR}}(\mathcal{A}^K | x)$ is the (marginal) probability that the ESR model chooses the item $a_l$ among any candidate sets,
while $\pi^{(l)}(a_l | x, \mathcal{S}^K(a_l))$ is the probability that $a_l$ is selected as the $l$'th item in the final ranking, given that $a_l$ is included in the candidates ($\mathcal{S}^K(a_l)$). More precisely, $\pi^{(l)}(a_l | x, \mathcal{S}^K(a_l))$ depends on both ESR and LSR and is defined as $\sum_{\mathcal{A}^K \in \mathcal{S}^K(a_l)} \pi_{\mathrm{ESR}}(\mathcal{A}^K | x, \mathcal{S}^K(a_l)) \pi_{\mathrm{LSR}}^{(l)}(a_l | x, \mathcal{A}^K)$, where $\pi_{\mathrm{ESR}}(\mathcal{A}^K | x, \mathcal{S}^{K}(a_l))$ is the normalized probability that each specific candidate set $\mathcal{A}^K$ is selected.\footnote{$\pi_{\mathrm{ESR}}(\mathcal{A}^K | x, \mathcal{S}^{K}(a_l)) = \pi_{\mathrm{ESR}}(\mathcal{A}^K | x) / \pi_{\mathrm{ESR}}(\mathcal{S}^{K}(a_l) | x)$.} 

The greatest benefit of the above policy decomposition is that the variable $\mathcal{S}^K(a_l)$ takes the marginal across candidate sets $\mathcal{A}^K$ and thus no specific $\mathcal{A}^K$ appears as a random variable. This is beneficial for variance reduction, as it eliminates the need to estimate $\mathbb{E}_{\mathcal{A}^K}[\cdot]$ from Monte-Carlo (MC) samples as done in V-PG. Then, deriving the gradient of ESR w.r.t. $\pi_{\mathrm{ESR}}(\mathcal{S}^K(a_l) | x)$, we have the following \textbf{\textit{``credit-assigned'' policy gradient} (CA-PG)}:
\begin{align}
    \nabla v_l(\pi) &= \mathbb{E}_{\mathcal{D}
    }[\nabla \log \pi_{\mathrm{ESR}}(\mathcal{S}^K(a_l) | x) \cdot r_l ].  \label{eq:credit_assigned_pg} 
\end{align}
The key trick is that CA-PG ignores the dependence of ESR on the second term (i.e., $\pi^{(l)}(a_l | x, \mathcal{S}^K(a_l))$), which helps avoid the dependency on each specific $\mathcal{A}^K$ in the gradient estimation.
Then, we 
% \tb{handle}
can increase
the ESR's (marginal) probability only for the action corresponding to the reward observation, thereby resolving the credit-assignment issue as illustrated in Figure~\ref{fig:credit_assignment} (Right). Moreover, \textbf{because the action space is now reduced from $|\mathcal{A}|^K$ to $|\mathcal{A}|$, we expect variance reduction and improved sample complexity}.

\subsection{Theoretical Analysis -- Connection to V-PG} \label{sec:theoretical}
To gain insight into how the proposed CA-PG works, we theoretically examine two important research questions: 
(1) What is the connection to V-PG and (2) when is the CA-PG sufficient for learning an effective ESR model? Below, we first show that CA-PG is a component of V-PG, designed for variance reduction. Note that all the proofs in this paper are provided in Appendix~\ref{app:proof}.

\begin{theorem} \label{thrm:residual} (Relationship between two policy gradients) CA-PG is a component of V-PG, i.e., 
\begin{align*}
    \text{(V-PG)} = \text{(CA-PG)} + \nabla C(\pi).
    % \text{(CA-PG)} = \text{(V-PG)} - \nabla C(\pi).
\end{align*}
The residual term ($\nabla C(\pi)$) is given as follows.
\begin{align*}
    % \mathbb{E}_{\mathcal{D}} \biggl[ \mathbb{E}_{\pi_{\mathrm{ESR}}(\mathcal{A}^K | x, \mathcal{S}^K(a_l))} \biggl[ 
    % \underbrace{\frac{ \pi^{(l)}_{\mathrm{LSR}}(a_l | x, \mathcal{A}^K)}{\pi^{(l)}(a_l | x, \mathcal{S}^K(a_l))}}_{(\spadesuit)}
    % \nabla \log \pi_{\mathrm{ESR}}(\mathcal{A}^K | x, \mathcal{S}^K(a_l)) \biggr] \cdot r_l \biggr]. 
    \mathbb{E} \biggl[ 
    \underbrace{\frac{ \pi^{(l)}_{\mathrm{LSR}}(a_l | x, \mathcal{A}^K)}{\pi^{(l)}(a_l | x, \mathcal{S}^K(a_l))}}_{(\spadesuit)}
    \nabla \log \pi_{\mathrm{ESR}}(\mathcal{A}^K | x, \mathcal{S}^K(a_l)) \cdot r_l \biggr],
\end{align*}
where the expectation is $\mathbb{E}[\cdot] := \mathbb{E}_{\mathcal{D}} [ \mathbb{E}_{\pi_{\mathrm{ESR}}(\mathcal{A}^K | x, \mathcal{S}^K(a_l))} [\cdot]]$.
Note that $(\spadesuit)$ can be seen as the normalized importance of each specific candidate set $\mathcal{A}^K$, determined by how likely $a_l$ is to be chosen by the late stage policy $\pi^{(l)}_{\mathrm{LSR}}(a_l | x, \mathcal{A}^K)$ given the candidate set $\mathcal{A}^K$. 
\end{theorem}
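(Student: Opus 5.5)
The plan is to start from V-PG in the form of Eq.~\eqref{eq:naive_pg} and factor the ESR probability of the sampled candidate set through the marginal event $\mathcal{S}^K(a_l)$. By the footnote definition, whenever $a_l \in \mathcal{A}^K$ we have
\begin{align*}
    \pi_{\mathrm{ESR}}(\mathcal{A}^K | x) = \pi_{\mathrm{ESR}}(\mathcal{S}^K(a_l) | x)\, \pi_{\mathrm{ESR}}(\mathcal{A}^K | x, \mathcal{S}^K(a_l)).
\end{align*}
Under $\mathcal{D}'$ the item $a_l$ is always drawn by the LSR from $\mathcal{A}^K$, so $a_l \in \mathcal{A}^K$ holds almost surely and the factorization applies to every sample. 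Taking logs and gradients splits the score into two pieces:
\begin{align*}
    \nabla \log \pi_{\mathrm{ESR}}(\mathcal{A}^K|x) = \nabla \log \pi_{\mathrm{ESR}}(\mathcal{S}^K(a_l)|x) + \nabla \log \pi_{\mathrm{ESR}}(\mathcal{A}^K|x,\mathcal{S}^K(a_l)).
\end{align*}
Substituting this into Eq.~\eqref{eq:naive_pg} writes V-PG as the sum of two expectations under $\mathcal{D}'$.

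The first expectation should reduce to CA-PG. Its integrand depends only on $(x, a_l, r_l)$. Integrating $\mathcal{A}^K$ and the other positions out of $\mathcal{D}'$ leaves $p(x)\pi^{(l)}(a_l|x)p(r_l|x,a_l)$, by Eq.~\eqref{eq:naive_policy_decomposition} and the independence assumption. This is the $l$-th marginal of $\mathcal{D}$, so the first term equals Eq.~\eqref{eq:credit_assigned_pg}.

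The second expectation becomes $\nabla C(\pi)$ after re-expressing the sampling law of $\mathcal{A}^K$ given $(x, a_l)$. Under $\mathcal{D}'$, Bayes' rule gives
\begin{align*}
    p(\mathcal{A}^K | x, a_l) = \frac{\pi_{\mathrm{ESR}}(\mathcal{A}^K|x)\,\pi^{(l)}_{\mathrm{LSR}}(a_l|x,\mathcal{A}^K)}{\pi^{(l)}(a_l|x)}.
\end{align*}
Dividing the numerator and denominator by $\pi_{\mathrm{ESR}}(\mathcal{S}^K(a_l)|x)$ and using the decomposition $\pi^{(l)}(a_l|x) = \pi_{\mathrm{ESR}}(\mathcal{S}^K(a_l)|x)\,\pi^{(l)}(a_l|x,\mathcal{S}^K(a_l))$ yields
\begin{align*}
    p(\mathcal{A}^K | x, a_l) = \pi_{\mathrm{ESR}}(\mathcal{A}^K|x,\mathcal{S}^K(a_l)) \cdot (\spadesuit).
\end{align*}
Here $r_l$ is conditionally independent of $\mathcal{A}^K$ given $(x,a_l)$. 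So the second term is an expectation over $(x,a_l,r_l)\sim\mathcal{D}$, then over $\mathcal{A}^K \sim \pi_{\mathrm{ESR}}(\cdot|x,\mathcal{S}^K(a_l))$, weighted by $(\spadesuit)$. This is exactly the stated residual.

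The main obstacle is this change of measure from $\mathcal{D}'$ to the nested expectation. The support must be restricted carefully to $\mathcal{S}^K(a_l)$, where the conditional ESR is well defined. One must also check that the $(\spadesuit)$ weights form a proper reweighting, i.e.\ that they average to one under $\pi_{\mathrm{ESR}}(\cdot|x,\mathcal{S}^K(a_l))$. That holds directly from the definition of $\pi^{(l)}(a_l|x,\mathcal{S}^K(a_l))$. Finally, the gradient of the fixed LSR factor never enters, consistent with the LSR being held fixed.
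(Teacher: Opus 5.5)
Your proof is correct, but it takes a different route from the paper's. The paper never manipulates the V-PG integrand. It starts from $\nabla v_l(\pi)$, writes $\pi^{(l)}(a_l|x)=\pi_{\mathrm{ESR}}(\mathcal{S}^K(a_l)|x)\,\pi^{(l)}(a_l|x,\mathcal{S}^K(a_l))$, applies the product rule, and uses the log-derivative trick on each factor. The residual comes from pushing $\nabla$ inside $\sum_{\mathcal{A}^K}\pi_{\mathrm{ESR}}(\mathcal{A}^K|x,\mathcal{S}^K(a_l))\,\pi^{(l)}_{\mathrm{LSR}}(a_l|x,\mathcal{A}^K)$ and dividing by $\pi^{(l)}(a_l|x,\mathcal{S}^K(a_l))$. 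The identification with V-PG is supplied separately, by the derivation in Appendix~\ref{app:derivation} that V-PG equals $\nabla v_l(\pi)$.

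You never differentiate the value. Instead you split the V-PG score pointwise, $\nabla\log\pi_{\mathrm{ESR}}(\mathcal{A}^K|x)=\nabla\log\pi_{\mathrm{ESR}}(\mathcal{S}^K(a_l)|x)+\nabla\log\pi_{\mathrm{ESR}}(\mathcal{A}^K|x,\mathcal{S}^K(a_l))$. You then recover the nested expectation by a Bayes change of measure, in which $(\spadesuit)$ appears as the density of the posterior $p(\mathcal{A}^K|x,a_l)$ relative to $\pi_{\mathrm{ESR}}(\cdot|x,\mathcal{S}^K(a_l))$.

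Your route gives a per-sample statement. The single-sample CA-PG estimator is exactly the V-PG one with the summand $\nabla\log\pi_{\mathrm{ESR}}(\mathcal{A}^K|x,\mathcal{S}^K(a_l))\,r_l$ deleted. The residual also has the simpler equivalent form $\mathbb{E}_{\mathcal{D}'}[\nabla\log\pi_{\mathrm{ESR}}(\mathcal{A}^K|x,\mathcal{S}^K(a_l))\,r_l]$. This ties directly to the variance discussion and gives a precise meaning to the paper's reading of $(\spadesuit)$ as a normalized importance weight. The paper's route instead shows that CA-PG is the gradient obtained by freezing the factor $\pi^{(l)}(a_l|x,\mathcal{S}^K(a_l))$, which is the ``key trick'' interpretation used in the main text.

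Your normalization check on $(\spadesuit)$ is a consistency check, not a step the identity needs. The Bayes formula already holds exactly, given that $\pi^{(l)}_{\mathrm{LSR}}(a_l|x,\mathcal{A}^K)=0$ whenever $a_l\notin\mathcal{A}^K$.
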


Theorem~\ref{thrm:residual} indicates that V-PG has an additional gradient term ($\nabla C(\pi)$) that propagates the gradient to each specific candidate set $\mathcal{A}^K$, depending on how likely LSR is to choose the target action given the corresponding candidate set $\mathcal{A}^K$. In other words, CA-PG equally propagates the gradient to each candidate $\mathcal{A}^K$, 
% intentionally 
ignoring from which specific $\mathcal{A}^K$ the action $a_l$ comes from. 
We intentionally use this trick to avoid dependency on $\mathcal{A}^K$ via marginalization, allowing CA-PG to reduce the variance greatly compared to V-PG by working on a smaller (marginalized) action space. Interestingly, it also resolves the credit-assignment issue without taking an expectation over all candidate sets.
% when using \tb{Monte-Carlo (MC)} samples for the gradient estimation. 
This advantage of CA-PG becomes more pronounced when the candidate-set size ($|\mathcal{A}|^K$) becomes large. 

Next, to discuss RQ. (2): ``When does
CA-PG learn an effective ESR policy?'', we first introduce the following notions of ``optimal alignment'' and ``optimal partition''.

\begin{definition} \textit{(Optimal alignment and partition) \label{def:optalign}
Let $a^{\ast}_1, a^{\ast}_2, \ldots,  a^{\ast}_{|\mathcal{A}|}$ be the ground-truth ranking of actions for context $x$ (i.e., $q(x, a^{\ast}_1) > q(x, a^{\ast}_2) > \cdots > q(x, a^{\ast}_{|\mathcal{A}|})$).}
\begin{itemize}
    \item \textit{A policy is \textit{\textbf{optimally aligned}} when the policy preserves the ranking of items in the probability space, as $\pi(a^{\ast}_1 | x) > \pi(a^{\ast}_2 | x) > \cdots > \pi(a^{\ast}_{|\mathcal{A}|} | x)$.}
    \item \textit{A policy is \textit{\textbf{optimally partitioned for top-$\bm{K}$}} when $\pi(a^{\ast}_k | x) > \pi(a^{\ast}_j | x)$ holds for any $(k, j)$, that satisfies $1 \leq k \leq K$ and $K + 1 \leq j \leq |\mathcal{A}|$}.
\end{itemize}
\end{definition}
Note that the optimal partition is a weaker condition than the optimal alignment, as it allows misalignment within the top-$K$ and within the tail items (i.e., those ranked $K + 1$ or lower). Then, the following theorem suggests that CA-PG learns the optimal partition of items for many practical choices of the LSR policy, as long as they have a reasonable level of alignment ability, as explained below.

\begin{theorem} \label{thrm:sufficient} (Sufficient condition for CA-PG)
CA-PG learns an ESR policy that is optimally partitioned for top-$K$, if the following condition is satisfied for any $(k, j), 1 \leq k \leq K, K+1 \leq j \leq |\mathcal{A}|$.
\begin{align*}
    \frac{\mathbb{E}_{\pi_{\mathrm{ESR}}(\mathcal{A}^K | x, \mathcal{S}^K(a^{\ast}_j))}[\pi_{\mathrm{LSR}}^{(l)}(a^{\ast}_j | x, \mathcal{A}^K)]}{\mathbb{E}_{\pi_{\mathrm{ESR}}(\mathcal{A}^K | x, \mathcal{S}^K(a^{\ast}_k))}[\pi_{\mathrm{LSR}}^{(l)}(a^{\ast}_k | x, \mathcal{A}^K)]} < \frac{q(x, a^{\ast}_k)}{q(x, a^{\ast}_j)}
\end{align*}
Note that, 
when optimizing a policy with $\nabla V(\pi) = \sum_{l=1}^L \alpha_l \nabla v_l(\pi)$, the LHS of the inequality becomes the policy ratio w.r.t. $\sum_{l=1}^L \alpha_l \pi^{(l)}(a | x, \mathcal{S}^K(a))$.
\end{theorem}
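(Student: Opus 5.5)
The plan is to read ``CA-PG learns'' as a statement about the fixed point, or ascent direction, of the CA-PG update in the parameters $\theta_a := \pi_{\mathrm{ESR}}(\mathcal{S}^K(a)|x)$, i.e.\ the marginal inclusion probabilities. I will first rewrite the CA-PG expectation in Eq.~\eqref{eq:credit_assigned_pg} as a sum over items rather than an expectation over sampled rankings. Using the decomposition $\pi^{(l)}(a|x) = \pi_{\mathrm{ESR}}(\mathcal{S}^K(a)|x)\,\pi^{(l)}(a|x,\mathcal{S}^K(a))$ and $\mathbb{E}[r_l|x,a_l]=q(x,a_l)$, we get, for each $x$,
\begin{align*}
\mathbb{E}\bigl[\nabla\log\pi_{\mathrm{ESR}}(\mathcal{S}^K(a_l)|x)\,r_l \,\big|\, x\bigr]
= \sum_{a} \pi^{(l)}(a|x,\mathcal{S}^K(a))\, q(x,a)\, \nabla \pi_{\mathrm{ESR}}(\mathcal{S}^K(a)|x).
\end{align*}
Here the $\pi_{\mathrm{ESR}}(\mathcal{S}^K(a)|x)$ factor cancels against the $\log$-derivative. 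By the definition of $\pi^{(l)}(a|x,\mathcal{S}^K(a))$ given after the decomposition, the coefficient $w(a) := \pi^{(l)}(a|x,\mathcal{S}^K(a))$ is exactly $\mathbb{E}_{\pi_{\mathrm{ESR}}(\mathcal{A}^K|x,\mathcal{S}^K(a))}[\pi^{(l)}_{\mathrm{LSR}}(a|x,\mathcal{A}^K)]$, which is the quantity in the numerator and denominator of the condition.

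Next, I will interpret this as the gradient of a surrogate objective $\tilde v_l(\theta) = \sum_a w(a)\,q(x,a)\,\theta_a$, holding the weights $w$ fixed (this is precisely the ``ignore the dependence of the second term'' trick). The constraint set of marginal inclusion probabilities of a size-$K$ set is $\{\theta\in[0,1]^{|\mathcal{A}|}:\sum_a\theta_a=K\}$. A linear objective over this set is maximized by putting $\theta_a=1$ on the $K$ items with the largest scores $s(a):=w(a)q(x,a)$. The condition of the theorem, after cross-multiplication (assuming $q>0$, which I will state as needed), reads $s(a^\ast_j) < s(a^\ast_k)$ for every top-$K$ item $k$ and every tail item $j$. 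Hence the top-$K$ items by score are exactly $a^\ast_1,\dots,a^\ast_K$, and gradient ascent pushes $\theta_{a^\ast_k}$ above $\theta_{a^\ast_j}$. Pairwise, a mass transfer from $a^\ast_j$ to $a^\ast_k$ has directional derivative $s(a^\ast_k)-s(a^\ast_j)>0$, so any stationary point of the projected CA-PG dynamics has $\pi_{\mathrm{ESR}}(\mathcal{S}^K(a^\ast_k)|x) > \pi_{\mathrm{ESR}}(\mathcal{S}^K(a^\ast_j)|x)$. This is the optimal partition of Definition~\ref{def:optalign} applied to the marginal inclusion probability. The extension to $\nabla V$ follows by linearity: replace $w(a)$ with $\sum_l \alpha_l \pi^{(l)}(a|x,\mathcal{S}^K(a))$, which gives the remark at the end of the statement.

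The main obstacle is making ``learns'' rigorous. The weights $w(a)$ themselves depend on $\pi_{\mathrm{ESR}}$, and the PL parametrization does not give free control over $\theta$. My first attempt will be to argue at the level of any stationary point, or limit point, of CA-PG. At such a point the condition is assumed to hold, so the linear-surrogate argument shows that a non-partitioned $\theta$ admits a strictly improving feasible direction, which contradicts stationarity. For PL, I will use the fact that inclusion probabilities are monotone in the logits, so ordering the logits transfers to ordering the inclusion probabilities. Because the condition is phrased for all $(k,j)$ at the current policy, this pointwise argument should suffice without needing a global convergence analysis.
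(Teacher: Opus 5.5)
Your proposal is correct and follows essentially the same route as the paper: Lemma~\ref{lem:expected_reward} is exactly your rewriting of CA-PG as a policy gradient on the inclusion probabilities $\pi_{\mathrm{ESR}}(\mathcal{S}^K(a)|x)$ with surrogate reward $q^{(l)}(x,\mathcal{S}^K(a)) = w(a)\,q(x,a)$. The paper's proof is also the same cross-multiplication (under positivity of $q$), showing that the condition is equivalent to $q^{(l)}(x,\mathcal{S}^K(a^\ast_k)) > q^{(l)}(x,\mathcal{S}^K(a^\ast_j))$ for every top-$K$/tail pair. The only difference is that the paper simply cites the general PG theorem for the step that CA-PG aligns $\pi_{\mathrm{ESR}}(\mathcal{S}^K(a)|x)$ with the rank of this surrogate reward, whereas your linear-objective-over-the-hypersimplex and stationarity argument spells that step out; the paper does not address the loose ends you flag either ($w$ moving with $\pi_{\mathrm{ESR}}$, and no convergence analysis behind ``learns'').
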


The proof outline of Theorem~\ref{thrm:sufficient} is as follows: From the general PG theorem~\citep{sutton1999reinforcement}, CA-PG updates the policy in a way to align $\pi_{\mathrm{ESR}}(S^K(a) | x)$ with the 
% order
rank of the corresponding expected reward, $q^{(l)}(x, S^K(a))$. This expected reward, $q^{(l)}(x, S^K(a))$, is defined as the LSR-propensity discounted reward, i.e., $\mathbb{E}_{\pi_{\mathrm{ESR}}(\mathcal{A}^K | x, \mathcal{S}^K(a))}[\pi_{\mathrm{LSR}}^{(l)}(a | x, \mathcal{A}^K)] \cdot q(x, a)$ from Lemma~\ref{lem:expected_reward} in Appendix~\ref{app:proof}. Therefore, to evaluate the (ground-truth) top-$K$ items higher than other items (i.e., $\forall (k, j), \pi_{\mathrm{ESR}}(S^K(a^{\ast}_k) | x) > \pi_{\mathrm{ESR}}(S^K(a^{\ast}_j) | x)$), the expected reward that CA-PG optimizes should satisfy $q^{(l)}(x, S^K(a_k)) > q^{(l)}(x, S^K(a_j))$. 

Assuming positivity of the reward, $\forall (x, a), q(x, a) > 0$, Theorem~\ref{thrm:sufficient} indicates that \textbf{a reasonably aligned LSR, e.g., the one defined by the optimal partition in Definition~\ref{def:optalign}, is sufficient for learning the optimal ESR policy}. Examples of such LSRs include arbitrary interpolation between the uniform random and optimal policy (e.g., epsilon-greedy or softmax). 
This is because the LHS of the inequality becomes 1 when using the uniform random policy as LSR, and becomes less than 1 when using a policy that aligns the items optimally. In contrast, the RHS of the inequality is always greater than 1.
Moreover, Theorem~\ref{thrm:sufficient} suggests that CA-PG learns an optimally partitioned ESR policy even in the presence of certain LSR misalignments, specifically: (1) misalignment within (ground-truth) top-$K$ items and within tail items (i.e., those ranked $K+1$ or lower), and (2) cross-top-$K$ misalignment, provided the LSR alignment error is bounded by the reward ratio. The condition in the theorem may hold even when the LSR has inconsistent alignment of items among candidate sets, e.g., due to item-diversity or item-fairness constraints~\citep{wang2021user}.
These 
theoretical
results 
indicate that CA-PG does not introduce much bias in the ESR alignment when using a practical LSR policy, while greatly reducing the variance. 

\begin{table*}
  \caption{Comparison of policy gradient methods for early stage retrieval}
  \label{tab:pg_comparison}
  \centering
  \scalebox{0.750}{
  \begin{tabular}{c||c|c|c|c|c|c}
    \toprule
    \multirow{2}{*}{PG methods}
    & \multicolumn{3}{c|}{action} & \multicolumn{3}{c}{gradient} \\
    \cline{2-7}
    & space & size & MC sample & alignment condition & variance & computation \\
    \toprule
    V-PG, V-PG-SwR & $\mathcal{A}^K$ & \textcolor{dkred}{$\bm{\approx |\mathcal{A}|^K}$} & \textcolor{dkred}{$\bm{(\mathcal{A}^K, a_{1:L})}$} & \textcolor{dkgreen}{\underline{\textbf{none (always accurate)}}} & \textcolor{dkred}{\textbf{high}} & $\mathcal{O}(K)$ \\
    \textbf{CA-PG (ours)}  & $\mathcal{S}^K$ & \textcolor{dkgreen}{\underline{$\bm{|\mathcal{A}|}$}} & \textcolor{dkgreen}{\underline{$\bm{a_{1:L}}$}} & + w/ reasonably-aligned LSR & low & \textcolor{dkred}{$\bm{\mathcal{O}(KL)}$} \\
    \textbf{CA-PG-SwR, TOP1-PG} & $(\mathcal{S}^1)$ & \textcolor{dkgreen}{\underline{$\bm{|\mathcal{A}|}$}} & \textcolor{dkgreen}{\underline{$\bm{a_{1:L}}$}} & + w/ Plackett-Luce ESR, \textcolor{dkred}{\textbf{+ w/o MoE}} & \textcolor{dkgreen}{\textbf{\underline{low}}} & $\mathcal{O}(ML)$, \textcolor{dkgreen}{\underline{$\bm{\mathcal{O}(L)}$}} \\
    \bottomrule
  \end{tabular}
  }
  \vskip 0.1in
  \raggedright
  \fontsize{9pt}{9pt}\selectfont 
  The \textcolor{dkred}{\textbf{red}} font represents the hardest requirement, and the \textcolor{dkgreen}{\textbf{\underline{green}}} font shows the easiest one. The alignment of the items becomes accurate when the additional condition (+ w/, w/o) is satisfied for the proposed method. MC is the abbreviation of ``Monte-Carlo''. 
\end{table*}

\subsection{Computation of the Score Function} \label{sec:plackett_luce}

Having established the general form of CA-PG, we now address the computation of the marginal probability $\pi_{\mathrm{ESR}}(\mathcal{S}^K(a_l) | x)$, which is required to derive the score function in Eq.~\eqref{eq:credit_assigned_pg}. This section focuses on the calculation of the score function of the Plackett-Luce (PL) policy family, as this is one of the most prevalent approaches for sampling candidate sets from a large pool of items~\citep{chen2019top}.

The PL policy samples ranked items without replacement by recursively applying the softmax function to the remaining items (which is often implemented with top-$K$ on Gumbel-trick~\citep{kool2019stochastic}). We consider the generalized family of PL using multiple sub-retrievers (i.e., mixture-of-experts (MoE) models), taking into account that having multiple candidate retrievers is often considered useful in practice~\citep{hron2021component, evnine2024achieving}:
\begin{align*}
    \forall k, \, 1 \leq k \leq K, \quad \hat{a}_k \sim \frac{ \exp(\hat{q}_{\mathrm{ESR}}^{m(k)}(x, a_k) / \tau) }{ \sum_{a \in \mathcal{A} \setminus \mathcal{A}^{(k-1)}} \exp(\hat{q}_{\mathrm{ESR}}^{m(k)}(x, a) / \tau)},
\end{align*}
where $m(k)$ is the model index for the $k$'th member. For example, when $K = 10$ and using $M=5$ experts of the MoE models, we can define $m(k)$ as follows: members 1 and 2 ($k=1,2$) use model 1 ($m=1$), and members 3 and 4 ($k=3,4$) use model 2 ($m=2$), and so on. $\hat{q}_{\mathrm{ESR}}(x, a)$ is the ESR model's logit value of each item $a$ to the given user context $x$, and $\tau > 0$ is the temperature parameter. $\mathcal{A}^{(k-1)} := (\hat{a}_1, \hat{a}_2, \cdots, \hat{a}_{(k-1)})$ is the previously sampled candidate set, where $\mathcal{A}^{(0)} = \emptyset$. 
Then, the marginal probability $\pi_{\mathrm{ESR}}(\mathcal{S}^K(a_l) | x)$ can be derived 
via a method detailed in Appendix~\ref{app:gradient_computation}, in a fully differential manner without requiring Monte-Carlo (MC) sampling.
However, a drawback is that this process requires $\mathcal{O}(KL)$ of computation as  
the probability must be calculated for every $l$ and $k$. This is worse than the computation of V-PG
($\mathcal{O}(K)$), 
which does not depend on the 
final 
ranking length, $L$. 

To mitigate the computational burden of CA-PG, we consider the sampling w/ replacement (SwR) approximation of PL, as done in~\cite{chen2019top, ma2020off}. Using the SwR approximation, PL can be seen as applying independent softmax to all the actions multiple times, resulting in the following \textbf{CA-PG-SwR}:
\begin{align*}
    \nabla v_l(\pi) = \mathbb{E}_{\mathcal{D}
    % p(x)\underline{\pi^{(l)}(a_l | x)}p(r_l | x, a_l)
    } \left[ \underline{ \nabla  \log \left( \textstyle \sum_{k = 1}^K \pi_{\mathrm{ESR}}(\mathcal{S}^1(a_l) | x; m(k)) \right)} \cdot r_l \right].  
    % \label{eq:swr_pg}
\end{align*}
We provide the derivation steps in Appendix~\ref{app:derivation_swr}.
CA-PG-SwR reduces the computational time significantly in the MoE case due to dependency on multiple logit models (from $\mathcal{O}(KL)$ to $\mathcal{O}(ML)$). Moreover, as the special case of CA-PG-SwR with a single logit-based PL model, we can specifically define the following ``TOP1-PG'':
\begin{align*}
    \nabla v_l(\pi) = \mathbb{E}_{\mathcal{D}
    % p(x)\underline{\pi^{(l)}(a_l | x)}p(r_l | x, a_l)
    }[ \underline{ \nabla  \log \pi_{\mathrm{ESR}}(\mathcal{S}^1(a_l) | x)} \cdot r_l ]. 
    % \label{eq:top1_pg}
\end{align*}
TOP1-PG dramatically reduces the computational order from $\mathcal{O}(KL)$ to $\mathcal{O}(L)$, a complexity substantially lower than the $\mathcal{O}(K)$ cost of V-PG.
TOP1-PG enables a fast and easy gradient computation on the probability that the target item $a$ is selected as the top-1 item in the candidate. This makes TOP1-PG a practical alternative to (fully-computed) CA-PG, especially in traditional applications using a single logit model. In contrast, it is also known that the MoE-based ESR model improves the exploration efficiency due to diversity in the candidate~\citep{hron2021component}, and there is a tradeoff of computational efficiency and performance between TOP1-PG (single) and CA-PGs (MoE).
Table~\ref{tab:pg_comparison} summarizes the key characteristics of each PG method. 

\section{Experiments}
This section empirically compares 
% the performance and computation time of 
each PG method. The experiment code is available at a GitHub repository: \href{https://github.com/facebookresearch/early_stage_retrieval}{https://github.com/facebookresearch/early\_stage\_retrieval}.

\subsection{Synthetic Data} \label{sec:synthetic_experiment}
We simulate 1000 unique users and 1000 unique items in the synthetic experiment, where users ($x$) and items ($a$) are associated with $d_x$ and $d_a$ dimensional embeddings $(x^{\ast}, a^{\ast})$ sampled from a uniform distribution $U[-1, 1]^{(d_x,d_a)}$. Using these embeddings, we model the expected reward as 
\begin{align*}
    q(x,a) = \mathrm{SoftPlus}(\langle M_{a} a^{\ast}, M_{x} x^{\ast} \rangle) + c,
\end{align*}
where $\mathrm{SoftPlus}(z) := \log(1 + \exp(z))$ is a function that ensures positive output, and $c = 1.0$ is a positive constant (to guarantee the reward positivity required by Theorem~\ref{thrm:sufficient}).
$M_{a}$ and $M_{x}$ are some $(d_h, d_a)$ and $(d_h, d_x)$ dimensional projection matrices, where we let $d_x = d_a = d_h = 10$. $\langle \cdot, \cdot \rangle$ is the inner-product between two embeddings. Given the above expected reward $q(x, a)$, the reward is sampled from a normal distribution $\mathcal{N}(q(x,a), \sigma^2(x, a))$, where $\sigma$ is the (item-dependent) noise level. Finally, we use the (oracle) Plackett-Luce policy 
defined as $\pi_{\mathrm{LSR}}^{(l)}(a_l | x, \mathcal{A}^K) := \exp(q(x, a_l) / \tau_{\mathrm{LSR}}) / (\sum_{a' \in \mathcal{A}^K \setminus a_{1:l-1}} \exp(q(x, a') / \tau_{\mathrm{LSR}}))$ as the (given) LSR policy.

\paragraph{Compared PG methods} 
To enable a fair comparison among different PG methods, we use the same Plackett-Luce (PL) policy \citep{oosterhuis2021computationally} with (mixture-of-experts; MoE) two-tower model architecture~\citep{guo2021stereotyping}. Specifically, a model corresponding to the $k$'th element ($m(k)$) generates 10-dimensional user and item embeddings $\hat{\mu}_{m(k)}(x)$ and $\hat{\nu}_{m(k)}(a)$ and  calculates a logit value of action $a$ given context $x$ by the inner product between two model embeddings: $\hat{q}_{m(k)}(x, a) = \langle \hat{\mu}_{m(k)}(x), \hat{\nu}_{m(k)}(a) \rangle$, where $\langle, \rangle$ represents the inner product.

Given this model, we compare four PG methods for the ESR training; \textbf{vanilla-PG} (V-PG, baseline), \textbf{V-PG-SwR} (sampling w/ replacement, baseline), \textbf{credit assigned-PG} (CA-PG, ours), and \textbf{CA-PG-SwR} (TOP1-PG when $M=1$, ours). 
We use a stochastic policy during the policy training phase, and evaluate the performance of deterministic (greedy) ESR for all compared methods to see the difference in the item alignment ability of the learned ESR model. We use the same procedure to set the learning rate for V-PGs and CA-PGs. The exact implementation of the policies and additional details are provided in Appendix~\ref{app:derivation_swr}, \ref{app:gradient_computation}, and  \ref{app:experiment_details}.

\begin{figure*}
\centering
\includegraphics[clip, width=0.70\linewidth]{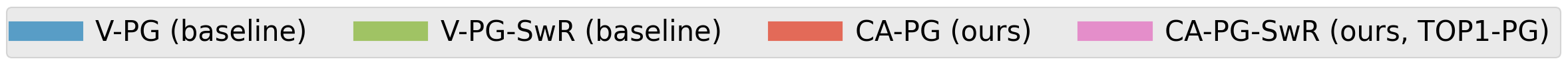}
\includegraphics[clip, width=0.95\linewidth]{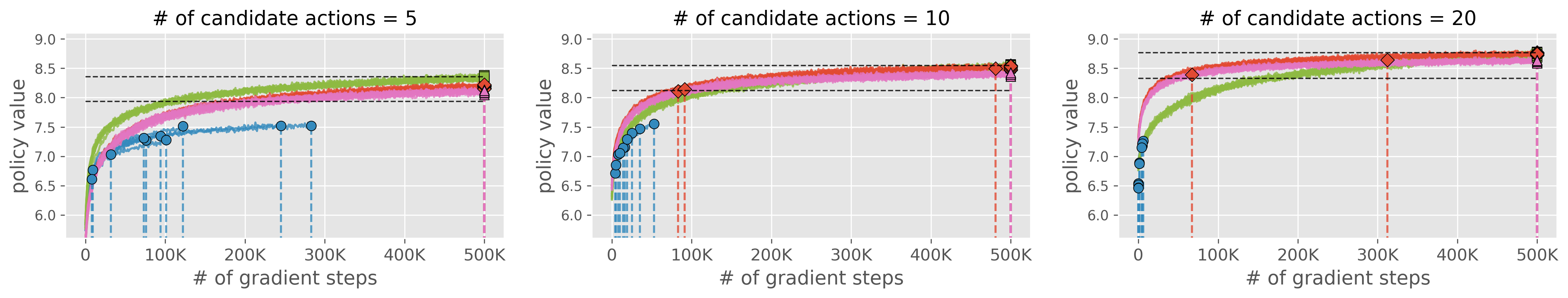}
  % \vspace{-3mm}
  \caption{\textbf{Comparing the training stability and convergence performance (policy value) of each PG with varying \# of candidates ($K$).} Each line shows the learning curve for a single random seed (i.e., 10 random seed results can overlap). The vertical lines indicate training termination; any line appearing before 500K gradient steps signifies an interruption due to gradient overflow. Two horizontal lines show the 100\% (top) and 95\% (bottom) lines of the V-PG-SwR policy values, to compare the convergence speed of each method.} 
  \label{fig:main_comparison}
\end{figure*}

\begin{table*}
  \caption{Comparing PG methods with a single logit model}
  \label{tab:pg_comparison}
  % \begin{minipage}[c]{0.7\hsize}
  \centering
  \scalebox{0.750}{
  \begin{tabular}{c||c|c|c||c|c|c|c}
    \toprule
    \multirow{2}{*}{PG methods}
    & \multicolumn{3}{c||}{\# of candidates ($K$)} & \multicolumn{4}{c}{alignment (optimality) of LSR}  \\
    \cline{2-8}
    & 5 & 10 & 20 & opt. & noisy & uniform & anti-opt.  \\
    \toprule
    V-PG & \textcolor{dkred}{\textbf{7.09}} {\small ($\pm$ 0.22)} & \textcolor{dkred}{\textbf{7.16}} {\small ($\pm$ 0.27)} & \textcolor{dkred}{\textbf{6.86}} {\small ($\pm$ 0.34)} & \textcolor{dkred}{\textbf{7.16}} {\small ($\pm$ 0.27)} & \textcolor{dkred}{\textbf{7.25}} {\small ($\pm$ 0.31)} & \textcolor{dkgreen}{\textbf{\underline{3.27}}} {\small ($\pm$ 0.28)} & \textcolor{dkgreen}{\textbf{\underline{1.07}}} {\small ($\pm$ 0.02)} \\
    (baseline) & $\rightarrow$ \textcolor{dkred}{\textbf{7.22}} {\small ($\pm$ 0.32)} & $\rightarrow$ \textcolor{dkred}{\textbf{7.17}} {\small ($\pm$ 0.27)} & $\rightarrow$ \textcolor{dkred}{\textbf{6.86}} {\small ($\pm$ 0.34)} & $\rightarrow$ \textcolor{dkred}{\textbf{7.17}} {\small ($\pm$ 0.27)} & $\rightarrow$ \textcolor{dkred}{\textbf{7.25}} {\small ($\pm$ 0.32)} & $\rightarrow$ \textcolor{dkred}{\textbf{3.28}} {\small ($\pm$ 0.29)} & $\rightarrow$ 1.07 {\small ($\pm$ 0.03)} \\ \midrule
    V-PG-SwR & \textcolor{dkgreen}{\textbf{\underline{7.68}}} {\small ($\pm$ 0.03)} & 7.80 {\small ($\pm$ 0.03)} & 7.88 {\small ($\pm$ 0.04)} & 7.80 {\small ($\pm$ 0.03)} & 7.86 {\small ($\pm$ 0.03)} & 3.80 {\small ($\pm$ 0.05)} & 1.05 {\small ($\pm$ 0.01)} \\
    (baseline) & $\rightarrow$ \textcolor{dkgreen}{\textbf{\underline{8.33}}} {\small ($\pm$ 0.03)} & $\rightarrow$ \textcolor{dkgreen}{\textbf{\underline{8.54}}} {\small ($\pm$ 0.02)} & $\rightarrow$ \textcolor{dkgreen}{\textbf{\underline{8.74}}} {\small ($\pm$ 0.03)} & $\rightarrow$ \textcolor{dkgreen}{\textbf{\underline{8.54}}} {\small ($\pm$ 0.02)} & $\rightarrow$ \textcolor{dkgreen}{\textbf{\underline{8.55}}} {\small ($\pm$ 0.04)} & 6.39 {\small ($\pm$ 0.08)} & $\rightarrow$ \textcolor{dkgreen}{\textbf{\underline{1.08}}} {\small ($\pm$ 0.01)} \\ \midrule
    \textbf{CA-PG} & 7.39 {\small ($\pm$ 0.03)} & \textcolor{dkgreen}{\textbf{\underline{7.99}}} {\small ($\pm$ 0.03)} & \textcolor{dkgreen}{\textbf{\underline{8.39}}} {\small ($\pm$ 0.02)} & \textcolor{dkgreen}{\textbf{\underline{7.99}}} {\small ($\pm$ 0.03)} & \textcolor{dkgreen}{\textbf{\underline{8.14}}} {\small ($\pm$ 0.04)} & 5.64 {\small ($\pm$ 0.11)} & \textcolor{dkred}{\textbf{1.00}} {\small ($\pm$ 0.01)} \\
    (ours) & $\rightarrow$ 8.19 {\small ($\pm$ 0.03)} & $\rightarrow$ 8.44 {\small ($\pm$ 0.17)} & $\rightarrow$ 8.70 {\small ($\pm$ 0.12)} & $\rightarrow$ 8.44 {\small ($\pm$ 0.17)} & $\rightarrow$ 8.42 {\small ($\pm$ 0.18)} & $\rightarrow$ 6.73 {\small ($\pm$ 0.13)} & $\rightarrow$ \textcolor{dkred}{\textbf{1.00}} {\small ($\pm$ 0.01)} \\ \midrule
    \textbf{CA-PG-SwR} & 7.37 {\small ($\pm$ 0.04)} & 7.94 {\small ($\pm$ 0.03)} & 8.33 {\small ($\pm$ 0.01)} & 7.94 {\small ($\pm$ 0.03)} & 8.11 {\small ($\pm$ 0.04)} & \textcolor{dkgreen}{\textbf{\underline{5.79}}} {\small ($\pm$ 0.09)} & \textcolor{dkred}{\textbf{1.00}} {\small ($\pm$ 0.01)} \\
    (ours, TOP1-PG) & $\rightarrow$ 8.11 {\small ($\pm$ 0.03)} & $\rightarrow$ 8.40 {\small ($\pm$ 0.03)} & $\rightarrow$ 8.62 {\small ($\pm$ 0.02)} & $\rightarrow$ 8.40 {\small ($\pm$ 0.03)} & $\rightarrow$ 8.41 {\small ($\pm$ 0.04)} & $\rightarrow$ \textcolor{dkgreen}{\textbf{\underline{6.90}}} {\small ($\pm$ 0.14)} & $\rightarrow$ \textcolor{dkred}{\textbf{1.00}} {\small ($\pm$ 0.01)} \\
    \bottomrule
  \end{tabular}
  }
  \vskip 0.1in
  \raggedright
  \fontsize{9pt}{9pt}\selectfont The top values report the policy value
  @50K gradient steps, while the bottom value reports those @500K gradient steps. If gradient overflow occurs, we use the performance at the interrupted point instead of @50K or @500K.
  The \textcolor{dkred}{\textbf{red}} font represents the worst performance, and the \textcolor{dkgreen}{\textbf{\underline{green}}} font shows the best performance, for each of @50K and @500K performances. 
\end{table*}

\paragraph{Experiment configs} 
To check if Theorems~\ref{thrm:residual} and \ref{thrm:sufficient} 
% and computational cost analysis 
hold empirically, we compare performance 
% and runtime 
of each PG method with the following configurations with the \underline{\textbf{bold}} font representing the default value: \textbf{(1) \# of candidate actions ($\bm{K}$)} - \{ 5, \underline{\textbf{10}}, 20 \}, \textbf{(2) \# of output actions ($\bm{L}$)} - \{ \underline{\textbf{1}}, 5 \}, \textbf{(3) \# of MoE models ($\bm{M}$)} \{ \underline{\textbf{1}}, 5 \}, (4) \textbf{alignment (optimality) of LSR} - \{ \underline{\textbf{optimal}}, noisy optimal, uniform, anti-optimal \}. Config (3) specifies whether we use a single logit model $\hat{q}_{\mathrm{ESR}}$ for the PL policy or use a different model for each position of the top-$K$ (i.e., MoE).  
Finally, Config (4) tests the performance of CA-PGs when Theorem~\ref{thrm:sufficient} (sufficient condition about LSR) does not hold. We run experiments with 10 random seeds.
% and report their statistics. 
Additionally, computational cost analysis is provided in Appendix~\ref{app:experiment_details}.

% \subsubsection{Results}

\paragraph{Results}
We first investigate how the stability and convergence speed of each policy gradient method change with varying size of candidates ($K$) in Figure~\ref{fig:main_comparison}. The former, stability, is studied by checking whether the policy gradient caused gradient overflow during the training process.
The results demonstrate that, without using a \textit{sampling with replacement} (SwR) approximation, the policy gradient becomes unstable regardless of the choice of PG method. In particular, V-PG behaves especially catastrophically when the candidate size is large (e.g., $K=20$). In contrast, CA-PG increases the training stability greatly, making the training process more reliable. Moreover, Figure~\ref{fig:main_comparison} also suggests that, even if we use the SwR approximation, V-PG-SwR slows down the convergence speed as we increase the size of candidates ($K$). Thus,  
CA-PGs perform much better than V-PG-SwR when we compare the performance with a small number of online interactions (@50K), and achieve 95\% of the final policy value of V-PG-SwR about \textbf{3x} faster than V-PG-SwR when $K=20$.
These differences are significant in a practical situation, as a poor performance of the policy in the initial training phase can harm both user satisfaction and business metrics.
% , and the fast convergence can help early adaptation to user's reward distribution shifts (e.g., user's preference changes).
In contrast, because V-PG is an unbiased estimation of the true ESR's gradient, V-PG-SwR performs better than CA-PGs at convergence (@500K), while CA-PGs also perform comparably.

Next, we also compare the performance (@50K and @500K) with varying alignment (optimality) of LSR in Table~\ref{tab:pg_comparison}. When using the optimal, noisy-optimal, and uniform LSR, the observations are quite similar to the case with varying candidate sizes -- V-PG-SwR often performs better than CA-PG at the convergence performance due to unbiasedness of the policy gradient, while CA-PG and TOP1-PG achieve better performance @50K due to faster convergence, empirically showing the tradeoff between bias and variance as Theorem~\ref{thrm:residual} indicates. Moreover, we also find that CA-PG and CA-PG-SwR fail only in the case with anti-optimal LSR, empirically verifying the validity of Theorem~\ref{thrm:sufficient}, stating that CA-PG can align items correctly as long as using a reasonably aligned LSR policy.

Finally, 
we further compare the performance of each PG method when using five MoE models as the base logit models, instead of using a single logit model, which is often useful in practical situations~\citep{hron2021component, evnine2024achieving}.
Specifically, Table~\ref{tab:moe_comparison} reports how the convergence performance changes with a single and MoE ($M=5$) models and varying size of LSR output (i.e., ranking length, $L$). The results reveal that, while V-PG is the best when using a single logit model, CA-PG and CA-PG-SwR perform better than V-PG-SwR when using MoE models. 
% The increase in performance by MoE is indeed remarkable, and changing \# of MoE models ($M$) from 1 to 5 is more effective than increasing the size of candidates ($K$) from 10 to 20 in the single output case.
These results show that CA-PGs can exploit more benefits of MoE models than V-PG, allowing more flexible modeling of ESR. 

\begin{table}
  \caption{Comparing PG methods with MoE candidate retrievers}
  \label{tab:moe_comparison}
  \centering
  \scalebox{0.750}{
  \begin{tabular}{c||c|c||c|c}
    \toprule
    \multirow{4}{*}{PG methods} &
    \multicolumn{4}{c}{\# of output ($L$)} \\ \cline{2-5}
    & \multicolumn{2}{c||}{1} & \multicolumn{2}{c}{5} \\ \cline{2-5}
    & \multicolumn{4}{c}{\# of MoE models ($M$)} \\ \cline{2-5}
    & 1 & 5 & 1 & 5 \\
    \toprule
    V-PG & \textcolor{dkred}{\textbf{7.17}} & \textcolor{dkred}{\textbf{7.63}} & \textcolor{dkred}{\textbf{25.0}} & \textcolor{dkred}{\textbf{26.5}} \\
    (baseline) & {\small ($\pm$ 0.27)} & {\small ($\pm$ 0.54)} & {\small ($\pm$ 2.22)} & {\small ($\pm$ 3.95)} \\ \midrule
    V-PG-SwR & \textcolor{dkgreen}{\textbf{\underline{8.54}}} & 8.58 & \textcolor{dkgreen}{\textbf{\underline{38.1}}} & 37.2 \\
    (baseline) & {\small ($\pm$ 0.02)} & {\small ($\pm$ 0.03)} & {\small ($\pm$ 0.19)} & {\small ($\pm$ 0.07)} \\ \midrule
    \textbf{CA-PG} & 8.44 & \textcolor{dkgreen}{\textbf{\underline{8.88}}} & 36.2 & 37.2 \\
    (ours) & {\small ($\pm$ 0.17)} & {\small ($\pm$ 0.02)} & {\small ($\pm$ 0.64)} & {\small ($\pm$ 1.00)} \\ \midrule
    \textbf{CA-PG-SwR} & 8.40 & 8.82 & 36.3 & \textcolor{dkgreen}{\textbf{\underline{37.6}}} \\
    (TOP1-PG, $M=1$) & {\small ($\pm$ 0.03)} & {\small ($\pm$ 0.02)} & {\small ($\pm$ 0.18)} & {\small ($\pm$ 0.09)} \\
    \bottomrule
  \end{tabular}
  }
  \vskip 0.1in
  \raggedright
  \fontsize{9pt}{9pt}\selectfont The \textcolor{dkred}{\textbf{red}} font represents the worst performance, and the \textcolor{dkgreen}{\textbf{\underline{green}}} font shows the best performance. The policy value is either @500K gradient steps or at the gradient overflow point. 
\end{table}

\subsection{Real Data}
Next, we compare the approaches on real data from the KuaiRec dataset~\citep{gao2022kuairec}. This dataset provides a fully observable (dense) user-item interaction matrix 
of $q(x, a)$ using the empirical "watch ratio" (i.e., watch time divided by the video duration) in a video streaming platform. The data consists of 1411 users and 3326 items, and we learned their embeddings with $d_h = 10$. To simulate a practical situation, we compare PG methods with larger candidate-set sizes, i.e.,  $K \in \{50, 100, 200\}$. To make policy learning computationally tractable in this setting, we need to use SwR approximation and compare only V-PG-SwR and CA-PG-SwR. All other settings remain the same as the defaults used in the synthetic experiments.

% \paragraph{Results} 
Table~\ref{tab:kuairec} summarizes the performance of each PG method with varying \# of candidates ($K$). The real-data experiment is a more challenging setting due to large candidate sizes and complex (and unknown) reward functions. We observe that CA-PG-SwR (i.e., TOP1-PG) improves the convergence speed over V-PG-SwR and becomes increasingly beneficial as $K$ grows. This result demonstrates that the ``credit-assignment''-based variance reduction is particularly useful in practical situations with 
% a complex reward function and 
a large set of candidates.

\begin{table}
  \caption{Comparing PG methods in the real-data experiment}
  \label{tab:kuairec}
  \centering
  \scalebox{0.750}{
  \begin{tabular}{c||c|c|c}
    \toprule
    \multirow{2}{*}{PG methods}
    & \multicolumn{3}{c}{\# of candidates ($K$)} \\
    \cline{2-4}
    & 50 & 100 & 200 \\
    \toprule
    V-PG-SwR & 5.05 {\small ($\pm$ 0.06)} & 5.39 {\small ($\pm$ 0.07)} & 5.92 {\small ($\pm$ 0.08)} \\
    (baseline) & $\rightarrow$ \textcolor{dkgreen}{\textbf{\underline{7.91}}} {\small ($\pm$ 0.05)} & $\rightarrow$\textcolor{dkgreen}{\textbf{\underline{8.22}}} {\small ($\pm$ 0.05)} & $\rightarrow$ 8.26 {\small ($\pm$ 0.07)} \\ \midrule
    \textbf{CA-PG-SwR} & \textcolor{dkgreen}{\textbf{\underline{5.61}}} {\small ($\pm$ 0.05)} & \textcolor{dkgreen}{\textbf{\underline{6.37}}} {\small ($\pm$ 0.04)} & \textcolor{dkgreen}{\textbf{\underline{7.10}}} {\small ($\pm$ 0.06)} \\
    (TOP1-PG) & $\rightarrow$ 7.19 {\small ($\pm$ 0.07)} & $\rightarrow$ 7.82 {\small ($\pm$ 0.03)} & $\rightarrow$ \textcolor{dkgreen}{\textbf{\underline{8.32}}} {\small ($\pm$ 0.04)} \\
    \bottomrule
  \end{tabular}
  }
  \vskip 0.1in
  \raggedright
  \fontsize{9pt}{9pt}\selectfont 
  % The \textcolor{dkred}{\textbf{red}} font represents the worst performance, and 
  The \textcolor{dkgreen}{\textbf{\underline{green}}} font shows the best performance. The policy value is measured at either @50K (top) or @500K (bottom) gradient steps.
\end{table}

\section{Related Work} \label{sec:related}

This section discusses the most closely related works on policy learning for large-scale ranking.\footnote{An extended discussion can be found in Appendix~\ref{app:related}.}

\paragraph{Policy gradient for ranking}
The Plackett-Luce (PL) policy has been widely used in search~\citep{oosterhuis2021computationally}, video streaming~\citep{chen2019top}, and document retrieval~\citep{gao2023policy} in a single-stage ranking setting. Given that the action space can be large in this ranking setting, the efficient calculation of PL has been a key challenge for the (single-stage) PG approach. For example, \cite{oosterhuis2021computationally} calculates PL for the partial reward of ranking, leveraging the cascading structure of the ranking decomposition. \cite{ma2021learning} approximates the (marginal) probability that an item set (which includes the actions selected in the final ranking) is ranked higher than the remaining subset of items. \cite{chen2019top} calculates the probability of the selected action being one of the top-$K$ actions in the final ranking. 
Furthermore, \cite{sakhi2023fast} proposes a PL-like stochastic policy that allows faster computation of the gradient using Gaussian modeling.
However, because all of these works focus on the \textit{single-stage} ranking setting, the credit assignment issue in the \textit{two-stage} setting has not been previously addressed. An efficient way of propagating the gradient to the ESR policy has remained non-trivial and underexplored.
% in the existing literature.

\cite{ma2020off} is the most related work to ours in that it studies the (vanilla) PG approach in the two-stage setting. However, as discussed in the main text, V-PG has the issues of high variance and credit assignment, limiting the scalability of V-PG to large candidate sizes ($K$). While \cite{ma2020off} discusses some heuristics to reduce the variance of V-PG (i.e., adding a discount factor to the non-target action in the candidate set), the mechanism of how and why this heuristic works has not been demystified. We provide a rigorous proof of how a general idea of \textit{``credit-assigned''} PG can resolve the two issues of V-PG.
Another related work is \cite{kiyohara2025diversity}, which retrieves a diverse set of candidates using an MoE-like PL policy via two-stage PG. However, \cite{kiyohara2025diversity} focuses on the case where reward interaction among ranking (i.e., diversity) matters, and credit assignment does not matter due to the dependence of the reward on multiple items. These two works also consider different learning challenges in the offline setting. Our work is the first to highlight and mitigate the credit assignment issue in two-stage PG.

\paragraph{Other variance reduction methods for RL} In another line of research, variance reduction has long been a key research question for policy gradient-based methods in RL~\citep{konda1999actor}. The prevalent approach for variance reduction has been baseline correction, and the representative method called PPO~\citep{schulman2017proximal} calculates the \textit{``advantage''} function, which normalizes the reward given context and action ($r(x, a_{1:L})$) by a model-predicted baseline estimating the expected reward given context $x$, i.e., $\hat{v}(x) \approx \mathbb{E}_{\pi(a_{1:L})}[r(x, a_{1:L}) | x]$. Similarly, in the recent large-language model (LLM) applications, GRPO~\citep{shao2024deepseekmath} queries $m$ samples per context and action $r_j(x, a), j \in [m]$, and normalizes the reward as $r'_j = (r_j - \text{mean}(r)) / \text{std}(r)$. These methods reduce the variance by normalizing the reward instead of manipulating the calculation of the score function (i.e., log-action choice probability), as CA-PG does. Therefore, the existing RL variance reduction methods are complementary to CA-PG, and we can combine the baseline correction-based variance reduction with CA-PG. As a proof of concept, we additionally report the results of combining both V-PG and CA-PG with GRPO in Appendix~\ref{app:experiment_details}, showing the benefit of using both CA-PG and GRPO together.

% \section{Conclusion and Future Work}
\section{Conclusion and Future Work}
This paper proposes a statistically 
and computationally 
efficient approach for training the early-stage ranker (ESR) in a two-stage decision policy. We examined the construction of the ``vanilla'' policy gradient (V-PG) and identified that variance and credit-assignment issues become prohibitive when the candidate-set size ($K$) is large. To tackle these challenges,
% that the key source of variance is the dependence on each specific candidate set, which also introduces a credit-assignment issue when training on Monte-Carlo samples. Motivated by these issues, 
we developed a novel PG method called ``credit-assigned'' PG (CA-PG) and demonstrated its benefits 
% both theoretically and empirically.
in both theoretical and empirical ways.
% We demonstrated the benefits of the proposed method over V-PG in both theoretical and empirical ways, showing it has the ability to reduce variance while preserving the item-alignment ability, under a practical choice of the late-stage ranking (LSR) policy. 

We identify two interesting directions for future work. The first is to explore the \textit{off-policy learning} (OPL) setting~\citep{chen2019top, ma2020off}, where we aim to optimize the (two-stage) policy using logged data. While we expect CA-PG's benefits to hold in the offline setting, exploring an efficient importance sampling strategy in our setting can be a promising future direction. 
The second direction is to investigate how to incorporate reward interaction within ranking. Our paper primarily focuses on the setting where the position-wise reward is affected only by the corresponding action at the same position. However, reward interactions such as diversity within the final ranking~\citep{oosterhuis2021computationally} may arise in applications like news recommendation. Exploring how to assign credit to each action under item-item interactions can also be impactful.
% in practice.

\section*{Acknowledgments}

This research was supported in part by NSF Awards IIS-2312865, IIS-2442137, OAC-2311521, CCF-2312774, and a gift to the LinkedIn-Cornell Bowers CIS Strategic Partnership. Sarah Dean is also partly supported by an AI2050 Early Career Fellowship program at Schmidt Sciences, and Haruka Kiyohara is partly supported by the Funai Overseas Scholarship and Quad Fellowship. 
Most of the work was done while Haruka Kiyohara was an intern at Central Applied Science, Meta. 

Additionally, we thank anonymous reviewers for valuable feedback during the discussion phase.

\section*{Impact Statement}
This paper investigates the optimization of early stage ranking (ESR) policies in a two-stage ranking framework. Applications span search, recommender systems, and large language models (especially retrieval-augmented generation; RAG).  We expect that the paper will have a broad impact on such related domains that rely on large-scale retrieval.

As we use synthetic simulations and a publicly available dataset~\citep{gao2022kuairec} in our paper, we do not expect that this paper will raise any ethical considerations. While reinforcement learning (RL) approaches in general have concerns regarding their learning instability and harm during the training process~\citep{levine2020offline}, this work aims to overcome such difficulties.
We believe this effort makes the RL approach more robust and tractable in practical settings.

% \bibliography{ref}
\bibliographystyle{icml2026}

%%%%%%%%%%%%%%%%%%%%%%%%%%%%%%%%%%%%%%%%%%%%%%%%%%%%%%%%%%%%%%%%%%%%%%%%%%%%%%%
%%%%%%%%%%%%%%%%%%%%%%%%%%%%%%%%%%%%%%%%%%%%%%%%%%%%%%%%%%%%%%%%%%%%%%%%%%%%%%%
% APPENDIX
%%%%%%%%%%%%%%%%%%%%%%%%%%%%%%%%%%%%%%%%%%%%%%%%%%%%%%%%%%%%%%%%%%%%%%%%%%%%%%%
%%%%%%%%%%%%%%%%%%%%%%%%%%%%%%%%%%%%%%%%%%%%%%%%%%%%%%%%%%%%%%%%%%%%%%%%%%%%%%%
\newpage
\appendix
\onecolumn
% \section{Extended related work} 
\section{Extended Related Work} \label{app:related}

We discuss additional related work in a broader context.

\paragraph{Traditional ranking recommendation and retrieval-augmented generation (RAG)} In the broader context, 
% regression-based approaches have been 
the most commonly used approach for ranking and retrieval-augmented generation (RAG)~\citep{lewis2020retrieval} has been accurately estimating the users' valuation on items from noisy user feedback, such as implicit click data (i.e., regression-based approach)~\citep{liu2009learning}. These regressed reward models are used to align items based on the predicted rewards, and many research papers have explored model architectures and user interactions models to accurately predict the users' valuation. A representative approach is LambdaMART~\citep{burges2010ranknet}, which trains a boosting algorithm to regress the users' interaction signals. The Plackett-Luce (PL) model has also been invented for regressing the users' choice model, which calculates the probability that users select the target item among the items that appear in the final ranking~\citep{guiver2009bayesian, maystre2015fast}. However, these approaches require a careful debiasing mechanism to align items correctly from biased logged data~\citep{joachims2017unbiased}, and how to take the LSR choice into account has remained unclear~\citep{ma2020off}. 
Even more importantly, regression-based approaches train the ESR model with the objective of accurately estimating the users' interaction signal, leading to an objective mismatch with the goal of \textit{maximizing} the reward~\citep{chen2019top}. Policy gradient (PG) approaches naturally resolve the above issue by deriving the gradient of the ESR model directly from the final objective of the expected reward. Our work makes PG more tractable and practically applicable in large-scale recommendation settings by tackling the variance and credit-assignment issues of two-stage PG.

\paragraph{Subset selection problems} 
Subset selection methods applied in recommender systems have mainly focused on optimizing for utility, diversity and fairness. Earlier work proposing maximal marginal relevance~\citep{carbonell1998use} and incorporating diversity as a key objective~\citep{gollapudi2009axiomatic, yu2009takes} established the necessity of selecting a ``representative'' set of items rather than just the top-ranked ones to mitigate redundancy (covered more extensively by~\citet{chakraborty2016survey} and~\citet{wu2024result}). Building on these, the field has increasingly moved toward framing subset selection as a sequential decision-making problem, often leveraging Reinforcement Learning (RL) and multi-armed bandit frameworks. Rather than treating selection as a static combinatorial task, contemporary approaches emphasize the need to balance exploration and exploitation in online ranking~\citep{hofmann2011balancing} and to optimize recommendation policies directly from logged feedback using counterfactual techniques~\citep{swaminathan2015counterfactual, li2011unbiased}. Within this narrative, \citet{kleinberg2018team} and, more recently, \citet{mehta2020hitting} focused on how individual item utilities aggregate into a collective reward that is optimized through the lens of maximizing expected order statistics. However, these works focus on the single-stage setting, and the credit-assignment issue in the two-stage ranking is an underexplored problem in the subset selection literature.

\paragraph{Two-stage recommendation and mixture-of-experts (MoE)} The mixture-of-experts (MoE) policy has been considered effective in sampling a diverse set of items and enhancing the efficiency of exploration in two-stage recommendation. For instance, \citet{guo2021stereotyping} empirically shows that using MoE can help improve the joint accessibility of two dissimilar items. \citet{hron2021component} shows that nominating candidate items using multiple models can accelerate the item exploration efficiency in two-stage bandit problems.
This paper thus explores how to integrate the generalized family of PL policies with the MoE models, and observes that MoE further improves the performance compared to the single-logit model cases of both CA-PG and CA-PG-SwR.

\section{Derivation of the Gradients} 

We first discuss the general form of the gradient, and then discuss how to combine the gradient with the Plackett-Luce policy w/ and w/o sampling with replacement (SwR) approximations.

\subsection{General Form of the Gradients}

\label{app:derivation}
Here, we provide the derivation steps of naive PG. For the derivation of credit assigned PG, please refer to Appendix~\ref{app:residual_proof} (i.e., proof of Theorem~\ref{thrm:residual}).

\begin{align*}
    \nabla v_l(\pi) 
    &= \nabla \, \mathbb{E}_{p(x)\pi^{(l)}(a_l | x)}[q(x, a_l)]  \\
    &= \nabla \, \mathbb{E}_{p(x)\pi_{\mathrm{ESR}}(\mathcal{A}^K | x) \pi_{\mathrm{LSR}}^{(l)}(a_l | x, \mathcal{A}^K)}[q(x, a_l)]  \\ 
    &= \nabla \, \mathbb{E}_{p(x)\pi_{\mathrm{ESR}}(\mathcal{A}^K | x)} \left[ \sum_{a_l \in \mathcal{A}} \pi_{\mathrm{LSR}}^{(l)}(a_l | x, \mathcal{A}^K) q(x, a_l) \right]   \\
    &= \nabla \, \mathbb{E}_{p(x)\pi_{\mathrm{ESR}}(\mathcal{A}^K | x)}[q^{(l)}(x, \mathcal{A}^K)]  \\
    &= \nabla \, \mathbb{E}_{p(x)} \left[ \sum_{\mathcal{A}^K} \pi_{\mathrm{ESR}}(\mathcal{A}^K | x) q^{(l)}(x, \mathcal{A}^K) \right]  \\
    &=  \mathbb{E}_{p(x)} \left[ \sum_{\mathcal{A}^K} \nabla \pi_{\mathrm{ESR}}(\mathcal{A}^K | x) q^{(l)}(x, \mathcal{A}^K) \right]  \\
    &=  \mathbb{E}_{p(x)} \left[ \sum_{\mathcal{A}^K} \pi_{\mathrm{ESR}}(\mathcal{A}^K | x) \frac{\nabla \pi_{\mathrm{ESR}}(\mathcal{A}^K | x)}{\pi_{\mathrm{ESR}}(\mathcal{A}^K | x)} q^{(l)}(x, \mathcal{A}^K) \right]  \\
    &=  \mathbb{E}_{p(x)} \left[ \sum_{\mathcal{A}^K} \pi_{\mathrm{ESR}}(\mathcal{A}^K | x) \nabla \log \pi_{\mathrm{ESR}}(\mathcal{A}^K | x) q^{(l)}(x, \mathcal{A}^K) \right] \\
    &= \mathbb{E}_{p(x)\pi_{\mathrm{ESR}}(\mathcal{A}^K | x)} [ \nabla \log \pi_{\mathrm{ESR}}(\mathcal{A}^K | x) q^{(l)}(x, \mathcal{A}^K) ] \qquad .. \, (\bigstar)  \\
    &= \mathbb{E}_{p(x)\pi_{\mathrm{ESR}}(\mathcal{A}^K | x)} \left[ \nabla \log \pi_{\mathrm{ESR}}(\mathcal{A}^K | x) \cdot \left( \sum_{a_l \in \mathcal{A}} \pi_{\mathrm{LSR}}^{(l)}(a_l | x, \mathcal{A}^K) q(x, a_l) \right) \right]  \\
    &= \mathbb{E}_{p(x)\pi_{\mathrm{ESR}}(\mathcal{A}^K | x)} \left[ \sum_{a_l \in \mathcal{A}} \pi_{\mathrm{LSR}}^{(l)}(a_l | x, \mathcal{A}^K) \nabla \log \pi_{\mathrm{ESR}}(\mathcal{A}^K | x) q(x, a_l) \right]  \\
    &= \mathbb{E}_{p(x)\pi_{\mathrm{ESR}}(\mathcal{A}^K | x)\pi_{\mathrm{LSR}}^{(l)}(a_l | x, \mathcal{A}^K)} \left[ \nabla \log \pi_{\mathrm{ESR}}(\mathcal{A}^K | x) q(x, a_l) \right]  \\
    &= \mathbb{E}_{p(x)\pi_{\mathrm{ESR}}(\mathcal{A}^K | x)\pi_{\mathrm{LSR}}^{(l)}(a_l | x, \mathcal{A}^K)p(r_l | x, a_l)} \left[ \nabla \log \pi_{\mathrm{ESR}}(\mathcal{A}^K | x) r_l \right]  \\
    &\left(= \mathbb{E}_{p(x)\pi_{\mathrm{ESR}}(\mathcal{A}^K | x)\pi_{\mathrm{LSR}}(a_{1:L} | x, \mathcal{A}^K)p(r_{1:L} | x, a_{1:L})} \left[ \nabla \log \pi_{\mathrm{ESR}}(\mathcal{A}^K | x) r_l \right] \right)  
\end{align*}

The last line is due to the independence of the reward $r_l$ from the other positions' actions $a_{l'}, \forall l' \neq l$. This means that the reward function is given as a simple form of $q(x, a_l) = \mathbb{E}[r_l | x, a_l]$.
% , which allows us to evaluate the policy gradient under the data distribution of $\pi$. 
We also use $q^{(l)}(x, \mathcal{A}^K) = \sum_{a_l \in \mathcal{A}} \pi_{\mathrm{LSR}}^{(l)}(a_l | x, \mathcal{A}^K) \cdot q(x, a_l)$ for a shorthand notation of the expected reward at position $l$ given the candidate set $\mathcal{A}^K$ and the LSR policy $\pi_{\mathrm{LSR}}$.

\subsection{Plackett-Luce Family and Sampling with Replacement (SwR) Approximation} \label{app:derivation_swr}

V-PG-SwR and CA-PG-SwR use the sampling with replacement (SwR) approximation of Plackett-Luce to calculate the score function (i.e., log action choice probability) in the computation of the PG. In contrast, the candidate selection process itself follows the original Plackett-Luce policy (i.e., not using the SwR process). That is, the original retrieval process recursively applies the softmax function on the \textit{remaining} actions:
\begin{align*}
    \forall k, \, 1 \leq k \leq K, \quad \hat{a}_k \sim \pi_{\mathrm{ESR}}(a_k | x, \mathcal{A}^{(k-1)}), \quad \text{where} \quad  \pi_{\mathrm{ESR}}(a_k | x, \mathcal{A}^{(k-1)}) = \frac{ \exp(\hat{q}_{\mathrm{ESR}}^{m(k)}(x, a_k) / \tau) }{ \underline{\sum_{a \in \mathcal{A} \setminus \mathcal{A}^{(k-1)}}} \exp(\hat{q}_{\mathrm{ESR}}^{m(k)}(x, a) / \tau)},
\end{align*}
and $\mathcal{A}^{(k-1)}$ is the set of previously sampled actions in the candidate set. In contrast,
we calculate the score function \underline{\textit{as if}} the softmax function is recursively applied to \textit{all} actions, instead of \textit{remaining} actions, in the following way.
\begin{align*}
    \forall k, \, 1 \leq k \leq K, \quad \hat{a}_k \sim \pi_{\mathrm{ESR}}(a_k | x), \quad \text{where} \quad  \pi_{\mathrm{ESR}}(a_k | x) = \frac{ \exp(\hat{q}_{\mathrm{ESR}}^{m(k)}(x, a_k) / \tau) }{ \underline{\sum_{a \in \mathcal{A}}} \exp(\hat{q}_{\mathrm{ESR}}^{m(k)}(x, a) / \tau)},
\end{align*}
Thus, when we compare V-PG and V-PG-SwR, the difference becomes as follows.
\begin{itemize}
    \item V-PG: 
    \begin{align*}
        \nabla \log \pi_{\mathrm{ESR}}(\mathcal{A}^K | x)
        &= \nabla \log \left( \textstyle \prod_{k=1}^K \underline{\pi_{\mathrm{ESR}}(a_k | x, \mathcal{A}^{(k-1)}; \, m(k))} \right) \\
        &= \textstyle \sum_{k=1}^K  \nabla \log \pi_{\mathrm{ESR}}(a_k | x, \mathcal{A}^{(k-1)}; \, m(k))
    \end{align*}
    \item V-PG-SwR: 
    \begin{align*}
        \nabla \log \pi_{\mathrm{ESR}}(\mathcal{A}^K | x)
        &= \nabla \log \left( \textstyle \prod_{k=1}^K \underline{\pi_{\mathrm{ESR}}(a_k | x; \, m(k))} \right) \\
        &= \textstyle \sum_{k=1}^K  \nabla \log \pi_{\mathrm{ESR}}(a_k | x; \, m(k)) 
    \end{align*}
\end{itemize}
Next, we derive CA-PG-SwR. By the definition of $S^K(a)$, we have
\begin{align*}
    \nabla \log \pi_{\mathrm{ESR}}(\mathcal{S}^K(a_l) | x) 
    &= \nabla \log \pi_{\mathrm{ESR}}(a_l \text{ is selected by } k \text{'th} \,|\, x) \\
    &= \nabla \log \left( \textstyle \sum_{k=1}^K \pi_{\mathrm{ESR}}(a_l \text{ is selected at } k \text{'th} \,|\, x) \right) \\
    &= \nabla \log \left( \textstyle \sum_{k=1}^K \pi_{\mathrm{ESR}}(a_l \,|\, x; \, m(k)) \right) \\
    &= \nabla \log \left( \textstyle \sum_{k=1}^K \pi_{\mathrm{ESR}}(S^1(a_l) \,|\, x; \, m(k)) \right)
\end{align*}
Notably, an important distinction between V-PG-SwR and CA-PG-SwR is the (set of) actions for which the action choice probabilities are computed. Specifically, V-PG requires evaluation of the probabilities for all actions in the candidate set ($a_k$), whereas CA-PG only includes evaluation of choice probabilities for the specific action selected by the LSR ($a_l$). This reflects the principled difference between these two approaches regarding credit-assignment.

\section{Proof of Theoretical Properties} \label{app:proof}
This section provides the proofs omitted from the main text.

\subsection{Proof of Theorem~\ref{thrm:residual}} \label{app:residual_proof}
\begin{proof}
We use $\pi^{(l)}(a_l | x) := \pi_{\mathrm{ESR}}(\mathcal{S}^K(a_l) | x) \pi^{(l)}(a_l | x, \mathcal{S}^K(a_l))$ to derive the policy gradient as follows.

\begin{align}
    \nabla v_l(\pi) 
    &= \nabla \, \mathbb{E}_{p(x)\pi^{(l)}(a_l | x)}[q(x, a_l)] \nonumber \\
    &= \nabla \, \mathbb{E}_{p(x)\pi_{\mathrm{ESR}}(\mathcal{S}^K(a_l) | x) \pi^{(l)}(a_l | x, \mathcal{S}^K(a_l))}[q(x, a_l)] \nonumber \\
    &= \nabla \, \mathbb{E}_{p(x)} \left[ \sum_{a_l \in \mathcal{A}} \pi_{\mathrm{ESR}}(\mathcal{S}^K(a_l) | x) \pi^{(l)}(a_l | x, \mathcal{S}^K(a_l)) \cdot q(x, a_l) \right] \nonumber \\
    &= \mathbb{E}_{p(x)} \left[ \sum_{a_l \in \mathcal{A}} \nabla ( \pi_{\mathrm{ESR}}(\mathcal{S}^K(a_l) | x) \pi^{(l)}(a_l | x, \mathcal{S}^K(a_l)) ) \cdot q(x, a_l) \right] \nonumber \\
    &= \mathbb{E}_{p(x)} \left[ \sum_{a_l \in \mathcal{A}} \nabla \pi_{\mathrm{ESR}}(\mathcal{S}^K(a_l) | x) \cdot \pi^{(l)}(a_l | x, \mathcal{S}^K(a_l)) \cdot q(x, a_l) \right] \label{eq:credit_assignment_grad} \\
    & \qquad + \mathbb{E}_{p(x)} \left[ \sum_{a_l \in \mathcal{A}} \pi_{\mathrm{ESR}}(\mathcal{S}^K(a_l) | x) \cdot \nabla \pi^{(l)}(a_l | x, \mathcal{S}^K(a_l)) \cdot q(x, a_l) \right] \label{eq:residual_grad}
\end{align}

Note that the decomposition in the last line is due to the derivative of the product, where the second term is non-vanishing since $\pi^{(l)}(a_l | x, \mathcal{S}^K(a_l))$ depends on $\pi_{\mathrm{ESR}}$. Then, 

\begin{align*}
    \eqref{eq:credit_assignment_grad} 
    &= \mathbb{E}_{p(x)} \left[ \sum_{a_l \in \mathcal{A}} \nabla \pi_{\mathrm{ESR}}(\mathcal{S}^K(a_l) | x) \cdot \pi^{(l)}(a_l | x, \mathcal{S}^K(a_l)) \cdot q(x, a_l) \right] \\
    &= \mathbb{E}_{p(x)} \left[ \sum_{a_l \in \mathcal{A}} \pi_{\mathrm{ESR}}(\mathcal{S}^K(a_l) | x) \frac{\nabla \pi_{\mathrm{ESR}}(\mathcal{S}^K(a_l) | x)}{\pi_{\mathrm{ESR}}(\mathcal{S}^K(a_l) | x)} \cdot \pi^{(l)}(a_l | x, \mathcal{S}^K(a_l)) \cdot q(x, a_l) \right] \\
    &= \mathbb{E}_{p(x)} \left[ \sum_{a_l \in \mathcal{A}} \pi_{\mathrm{ESR}}(\mathcal{S}^K(a_l) | x) \nabla \log \pi_{\mathrm{ESR}}(\mathcal{S}^K(a_l) | x) \cdot \pi^{(l)}(a_l | x, \mathcal{S}^K(a_l)) \cdot q(x, a_l) \right] \\
    &= \mathbb{E}_{p(x)} \left[ \sum_{a_l \in \mathcal{A}} \pi_{\mathrm{ESR}}(\mathcal{S}^K(a_l) | x) \pi^{(l)}(a_l | x, \mathcal{S}^K(a_l)) \nabla \log \pi_{\mathrm{ESR}}(\mathcal{S}^K(a_l) | x) q(x, a_l) \right] \\
    &= \mathbb{E}_{p(x)} \left[ \sum_{a_l \in \mathcal{A}} \pi^{(l)}(a_l | x) \nabla \log \pi_{\mathrm{ESR}}(\mathcal{S}^K(a_l) | x) q(x, a_l) \right] \\
    &= \mathbb{E}_{p(x)\pi^{(l)}(a_l | x)}[ \nabla \log \pi_{\mathrm{ESR}}(\mathcal{S}^K(a_l) | x) q(x, a_l)] \\
    &= \mathbb{E}_{p(x)\pi^{(l)}(a_l | x)p(r_l|x,a_l)}[ \nabla \log \pi_{\mathrm{ESR}}(\mathcal{S}^K(a_l) | x) r_l] \\
    & \left( = \mathbb{E}_{p(x)\pi(a_{1:L} | x)p(r_{1:L}|x,a_{1:L})}[ \nabla \log \pi_{\mathrm{ESR}}(\mathcal{S}^K(a_l) | x) r_l] \right) \\
    & \left( = (\text{CA-PG}) \right)
\end{align*}

The last line follows from the independence of the rewards, i.e., $q(x, a_l) = \mathbb{E}[r_l | x, a_l]$ (i.e., does not depend on $(a_{l'}, r_{l'}), l' \neq l$).
% which allows us to evaluate the policy gradient under the data distribution of $\pi$. 
We also have,

\begin{align*}
    \eqref{eq:residual_grad} 
    &= \mathbb{E}_{p(x)} \left[ \sum_{a_l \in \mathcal{A}} \pi_{\mathrm{ESR}}(\mathcal{S}^K(a_l) | x) \cdot \nabla \pi^{(l)}(a_l | x, \mathcal{S}^K(a_l)) \cdot q(x, a_l) \right] \\
    &= \mathbb{E}_{p(x)} \left[ \sum_{a_l \in \mathcal{A}} \pi_{\mathrm{ESR}}(\mathcal{S}^K(a_l) | x) \pi^{(l)}(a_l | x, \mathcal{S}^K(a_l)) \cdot \frac{\nabla \pi^{(l)}(a_l | x, \mathcal{S}^K(a_l))}{\pi^{(l)}(a_l | x, \mathcal{S}^K(a_l))} \cdot q(x, a_l) \right] \\
    &= \mathbb{E}_{p(x)} \left[ \sum_{a_l \in \mathcal{A}} \pi^{(l)}(a_l | x) \cdot \frac{\nabla \pi^{(l)}(a_l | x, \mathcal{S}^K(a_l))}{\pi^{(l)}(a_l | x, \mathcal{S}^K(a_l))} \cdot q(x, a_l) \right] \\
    &= \mathbb{E}_{p(x)\pi^{(l)}(a_l | x)} \left[ \frac{\nabla \pi^{(l)}(a_l | x, \mathcal{S}^K(a_l))}{\pi^{(l)}(a_l | x, \mathcal{S}^K(a_l))} \cdot q(x, a_l) \right] \\
    &= \mathbb{E}_{p(x)\pi^{(l)}(a_l | x)} \left[ \frac{1}{\pi^{(l)}(a_l | x, \mathcal{S}^K(a_l))} \nabla \left( \sum_{\mathcal{A}^K} \pi_{\mathrm{ESR}}(\mathcal{A}^K | x, \mathcal{S}^K(a_l)) \pi_{\mathrm{LSR}}^{(l)}(a_l | x, \mathcal{A}^K) \right) \cdot q(x, a_l) \right] \\
    &= \mathbb{E}_{p(x)\pi^{(l)}(a_l | x)} \left[ \sum_{\mathcal{A}^K} \nabla \pi_{\mathrm{ESR}}(\mathcal{A}^K | x, \mathcal{S}^K(a_l)) \cdot \frac{\pi_{\mathrm{LSR}}^{(l)}(a_l | x, \mathcal{A}^K)}{\pi^{(l)}(a_l | x, \mathcal{S}^K(a_l))} q(x, a_l) \right] \\
    &= \mathbb{E}_{p(x)\pi^{(l)}(a_l | x)} \left[ \sum_{\mathcal{A}^K} \pi_{\mathrm{ESR}}(\mathcal{A}^K | x, \mathcal{S}^K(a_l)) \frac{\nabla \pi_{\mathrm{ESR}}(\mathcal{A}^K | x, \mathcal{S}^K(a_l))}{\pi_{\mathrm{ESR}}(\mathcal{A}^K | x, \mathcal{S}^K(a_l))} \cdot \frac{\pi_{\mathrm{LSR}}^{(l)}(a_l | x, \mathcal{A}^K)}{\pi^{(l)}(a_l | x, \mathcal{S}^K(a_l))} \cdot q(x, a_l) \right] \\
    &= \mathbb{E}_{p(x)\pi^{(l)}(a_l | x)} \left[ \sum_{\mathcal{A}^K} \pi_{\mathrm{ESR}}(\mathcal{A}^K | x, \mathcal{S}^K(a_l)) \nabla \log \pi_{\mathrm{ESR}}(\mathcal{A}^K | x, \mathcal{S}^K(a_l)) \cdot \frac{\pi_{\mathrm{LSR}}^{(l)}(a_l | x, \mathcal{A}^K)}{\pi^{(l)}(a_l | x, \mathcal{S}^K(a_l))} \cdot q(x, a_l) \right] \\
    &= \mathbb{E}_{p(x)\pi^{(l)}(a_l | x)} \left[ \mathbb{E}_{\pi_{\mathrm{ESR}}(\mathcal{A}^K | x, \mathcal{S}^K(a_l))} \left[\frac{\pi_{\mathrm{LSR}}^{(l)}(a_l | x, \mathcal{A}^K)}{\pi^{(l)}(a_l | x, \mathcal{S}^K(a_l))} \nabla \log \pi_{\mathrm{ESR}}(\mathcal{A}^K | x, \mathcal{S}^K(a_l)) \right] \cdot q(x, a_l) \right] \\
    &= \mathbb{E}_{p(x)\pi^{(l)}(a_l | x)p(r_l | x, a_l)} \left[ \mathbb{E}_{\pi_{\mathrm{ESR}}(\mathcal{A}^K | x, \mathcal{S}^K(a_l))} \left[\frac{\pi_{\mathrm{LSR}}^{(l)}(a_l | x, \mathcal{A}^K)}{\pi^{(l)}(a_l | x, \mathcal{S}^K(a_l))} \nabla \log \pi_{\mathrm{ESR}}(\mathcal{A}^K | x, \mathcal{S}^K(a_l)) \right] \cdot r_l \right] \\
    &\left(= \mathbb{E}_{p(x)\pi(a_{1:L} | x)p(r_{1:L} | x, a_{1:L})} \left[ \mathbb{E}_{\pi_{\mathrm{ESR}}(\mathcal{A}^K | x, \mathcal{S}^K(a_l))} \left[\frac{\pi_{\mathrm{LSR}}^{(l)}(a_l | x, \mathcal{A}^K)}{\pi^{(l)}(a_l | x, \mathcal{S}^K(a_l))} \nabla \log \pi_{\mathrm{ESR}}(\mathcal{A}^K | x, \mathcal{S}^K(a_l)) \right] \cdot r_l \right] \right) \\
    &\left(= \nabla C(\pi) \right)
\end{align*}

Again, the last line follows from the independence of the rewards, as used in the detailed derivation of Eq.~\eqref{eq:credit_assignment_grad}. 
\end{proof}

\subsection{Proof of Theorem~\ref{thrm:sufficient}} \label{proof:sufficient}
As a preparation, we first introduce the following lemma to discuss the expected rewards, which are optimized by each of V-PG and CA-PG.

\begin{lemma} \label{lem:expected_reward} (Expected rewards for optimization) In expectation, V-PG (baseline) and the CA-PG (proposal) optimize the objective with the following expected rewards.
\begin{align*}
    & \text{(CA-PG)}: \; 
    \mathbb{E}_{\pi_{\mathrm{ESR}}(\mathcal{A}^K | x, \mathcal{S}^K(a_l))}[\pi^{(l)}_{\mathrm{LSR}}(a_l | x, \mathcal{A}^K)] \cdot q(x, a_l) \quad \text{(conditioned on $x$ and $\mathcal{S}^K(a_l)$)}
    % % \mathbb{E}_{\underline{\pi_{\mathrm{ESR}}(S^K(a_l) | x)}}[\nabla \log \pi_{\mathrm{ESR}}(\mathcal{S}^K(a_l) | x) \cdot q^{(l)}(x, \mathcal{S}^K(a_l)) ] \\
    % \mathbb{E}_{\underline{S^K(a_l)}}[\nabla \log \pi_{\mathrm{ESR}}(\mathcal{S}^K(a_l) | \cdot) q^{(l)}(\cdot, \mathcal{S}^K(a_l)) ] \\
    \\
    & \text{(V-PG)}: \quad \mathbb{E}_{\pi^{(l)}_\mathrm{LSR}(a_l | x, \mathcal{A}^K)}[q(x, a_l)] \quad \text{(conditioned on $x$ and $\mathcal{A}^K$)}
    % \textstyle \sum_{a_l \in \mathcal{A}} \pi_{\mathrm{LSR}}^{(l)}(a_l | x, \mathcal{A}^K) \cdot q(x, a_l)
    % \mathbb{E}_{\underline{\mathcal{A}^K}}[\nabla \log \pi_{\mathrm{ESR}}(\mathcal{A}^K | \cdot) q^{(l)}(\cdot, \mathcal{A}^K) ]
    % % \mathbb{E}_{\underline{\pi_{\mathrm{ESR}}(\mathcal{A}^K | x)}}[\nabla \log \pi_{\mathrm{ESR}}(\mathcal{A}^K | x) \cdot q^{(l)}(x, \mathcal{A}^K) ]
\end{align*}
In the following, we denote the expected rewards of CA-PG and V-PG as $q^{(l)}(x, \mathcal{S}^K(a_l))$ and $q^{(l)}(x, \mathcal{A}^K)$, respectively.
% where the expected rewards are $q^{(l)}(x, \mathcal{S}^K(a_l)) = \mathrm{E}_{\pi_{\mathrm{ESR}}(\mathcal{A}^K | x, \mathcal{S}^K(a_l))}[\pi^{(l)}_{\mathrm{LSR}}(a_l | x, \mathcal{A}^K)] \cdot q(x, a_l)$ and $q^{(l)}(x, \mathcal{A}^K) = \textstyle \sum_{a_l \in \mathcal{A}} \pi_{\mathrm{LSR}}^{(l)}(a_l | x, \mathcal{A}^K) \cdot q(x, a_l)$.
\end{lemma}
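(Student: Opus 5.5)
The approach is to rewrite each gradient in the canonical "score function times expected reward" form of a policy gradient over its own action variable. The per-action multiplier of the score is then the expected reward being optimized.

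For V-PG, the work is already done in the derivation of Appendix~\ref{app:derivation}. The line marked $(\bigstar)$ reads
\begin{align*}
\nabla v_l(\pi) = \mathbb{E}_{p(x)\pi_{\mathrm{ESR}}(\mathcal{A}^K|x)}\bigl[\nabla \log \pi_{\mathrm{ESR}}(\mathcal{A}^K|x)\, q^{(l)}(x,\mathcal{A}^K)\bigr],
\end{align*}
with $q^{(l)}(x,\mathcal{A}^K)=\sum_{a_l}\pi^{(l)}_{\mathrm{LSR}}(a_l|x,\mathcal{A}^K)q(x,a_l)=\mathbb{E}_{\pi^{(l)}_{\mathrm{LSR}}(a_l|x,\mathcal{A}^K)}[q(x,a_l)]$. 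This is exactly a standard policy gradient over the action $\mathcal{A}^K$ with conditional reward $q^{(l)}(x,\mathcal{A}^K)$, so I will simply cite it.

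For CA-PG, I start from the expression of term \eqref{eq:credit_assignment_grad} in the proof of Theorem~\ref{thrm:residual}:
\begin{align*}
\text{(CA-PG)} = \mathbb{E}_{p(x)}\Bigl[\sum_{a_l}\nabla \pi_{\mathrm{ESR}}(\mathcal{S}^K(a_l)|x)\,\pi^{(l)}(a_l|x,\mathcal{S}^K(a_l))\,q(x,a_l)\Bigr].
\end{align*}
Here the factor $\pi^{(l)}(a_l|x,\mathcal{S}^K(a_l))$ is treated as held fixed, which is the defining trick of CA-PG. Multiplying and dividing by $\pi_{\mathrm{ESR}}(\mathcal{S}^K(a_l)|x)$ then turns this into an expectation over the marginal inclusion event $\mathcal{S}^K(a_l)$ of $\nabla\log\pi_{\mathrm{ESR}}(\mathcal{S}^K(a_l)|x)$ times
\begin{align*}
\pi^{(l)}(a_l|x,\mathcal{S}^K(a_l))\,q(x,a_l).
\end{align*}
Next I unfold the definition
\begin{align*}
\pi^{(l)}(a_l|x,\mathcal{S}^K(a_l))=\sum_{\mathcal{A}^K\in\mathcal{S}^K(a_l)}\pi_{\mathrm{ESR}}(\mathcal{A}^K|x,\mathcal{S}^K(a_l))\,\pi^{(l)}_{\mathrm{LSR}}(a_l|x,\mathcal{A}^K),
\end{align*}
which is precisely $\mathbb{E}_{\pi_{\mathrm{ESR}}(\mathcal{A}^K|x,\mathcal{S}^K(a_l))}[\pi^{(l)}_{\mathrm{LSR}}(a_l|x,\mathcal{A}^K)]$. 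The CA-PG multiplier therefore equals the claimed LSR-propensity-discounted reward $q^{(l)}(x,\mathcal{S}^K(a_l))$, conditioned on $x$ and $\mathcal{S}^K(a_l)$.

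The main subtlety is interpretive rather than computational. The sum over $a_l$ weighted by $\pi_{\mathrm{ESR}}(\mathcal{S}^K(a_l)|x)$ is not a probability distribution, since these weights sum to $K$ rather than $1$. I will note that the normalization constant does not affect the gradient direction or the per-item ranking: it is a positive rescaling, and one can equivalently normalize by $1/K$. So the expression is legitimately "a policy gradient optimizing the reward $q^{(l)}(x,\mathcal{S}^K(a_l))$" in the same sense as V-PG. I will also remark that the CA-PG reward depends on $\pi_{\mathrm{ESR}}$ only through a term whose gradient CA-PG drops; this is exactly the residual $\nabla C(\pi)$ from Theorem~\ref{thrm:residual}.
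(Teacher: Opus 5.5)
Your proposal is correct and follows the paper's proof essentially step for step: you cite $(\bigstar)$ for V-PG, and for CA-PG you factor $\pi^{(l)}(a_l|x)=\pi_{\mathrm{ESR}}(\mathcal{S}^K(a_l)|x)\,\pi^{(l)}(a_l|x,\mathcal{S}^K(a_l))$ so that the score $\nabla\log\pi_{\mathrm{ESR}}(\mathcal{S}^K(a_l)|x)$ multiplies $\pi^{(l)}(a_l|x,\mathcal{S}^K(a_l))\,q(x,a_l)$, then unfold the conditional into $\mathbb{E}_{\pi_{\mathrm{ESR}}(\mathcal{A}^K|x,\mathcal{S}^K(a_l))}[\pi^{(l)}_{\mathrm{LSR}}(a_l|x,\mathcal{A}^K)]$. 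The only differences are that you start from the pre-log-trick form in the proof of Theorem~\ref{thrm:residual} rather than from the sampled form (immaterial), and that you correctly note the inclusion probabilities sum to $K$ rather than $1$, a point the paper glosses over when it writes $\mathbb{E}_{p(x)\pi_{\mathrm{ESR}}(\mathcal{S}^K(a_l)|x)}[\cdot]$ and which the positive $1/K$ rescaling leaves harmless.
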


\begin{proof}
We have already derived the expected reward of the vanilla PG in Appendix~\ref{app:derivation} ($\bigstar$). Thus, we derive the expected reward of the credit-assigned PG as follows.
\begin{align*}
    \nabla v_l(\pi) 
    &= \mathbb{E}_{p(x)\pi^{(l)}(a_l | x)p(r_l | x, a_l)}[\nabla \log \pi_{\mathrm{ESR}}(\mathcal{S}^K(a_l) | x) r_l ] \\
    &= \mathbb{E}_{p(x)\pi^{(l)}(a_l | x)}[\nabla \log \pi_{\mathrm{ESR}}(\mathcal{S}^K(a_l) | x) q(x, a_l) ] \\
    &= \mathbb{E}_{p(x)\pi_{\mathrm{ESR}}(\mathcal{S}^K(a_l) | x)\pi^{(l)}(a_l | x, \mathcal{S}^K(a_l))}[\nabla \log \pi_{\mathrm{ESR}}(\mathcal{S}^K(a_l) | x) q(x, a_l) ] \\
    &= \mathbb{E}_{p(x)\pi_{\mathrm{ESR}}(\mathcal{S}^K(a_l) | x)}[\nabla \log \pi_{\mathrm{ESR}}(\mathcal{S}^K(a_l) | x) \cdot \pi^{(l)}(a_l | x, \mathcal{S}^K(a_l)) q(x, a_l) ] \\
    &= \mathbb{E}_{p(x)\pi_{\mathrm{ESR}}(\mathcal{S}^K(a_l) | x)}\Biggl[\nabla \log \pi_{\mathrm{ESR}}(\mathcal{S}^K(a_l) | x) 
    \\& \qquad \cdot \left( \sum_{\mathcal{A}^K} \pi_{\mathrm{ESR}}(\mathcal{A}^K | x, \mathcal{S}^K(a_l)) \pi_{\mathrm{LSR}}^{(l)}(a_l | x, \mathcal{A}^K (, \mathcal{S}^K(a_l))) \right) q(x, a_l) \Biggr] \\
    &= \mathbb{E}_{p(x)\pi_{\mathrm{ESR}}(\mathcal{S}^K(a_l) | x)}[\nabla \log \pi_{\mathrm{ESR}}(\mathcal{S}^K(a_l) | x) \cdot \mathbb{E}_{\pi_{\mathrm{ESR}}(\mathcal{A}^K | x, \mathcal{S}^K(a_l))}[\pi_{\mathrm{LSR}}^{(l)}(a_l | x, \mathcal{A}^K)] q(x, a_l) ] \\
    &= \mathbb{E}_{p(x)\pi_{\mathrm{ESR}}(\mathcal{S}^K(a_l) | x)}[\nabla \log \pi_{\mathrm{ESR}}(\mathcal{S}^K(a_l) | x) \cdot q^{(l)}(x, \mathcal{S}^K(a_l)) ] 
\end{align*}
Note that we use $\pi_{\mathrm{LSR}}^{(l)}(a_l | x, \mathcal{A}^K, \mathcal{S}^K(a_l)) = \pi_{\mathrm{LSR}}^{(l)}(a_l | x, \mathcal{A}^K)$. The last line follows from the definition of $q^{(l)}(x, \mathcal{S}^K(a_l))$.
\end{proof}

There are two key differences between the two expected rewards in Lemma~\ref{lem:expected_reward}. First, V-PG considers the expected reward marginalized over the actions chosen by the LSR policy, i.e.,  $\mathbb{E}_{\pi^{(l)}_\mathrm{LSR}(a_l | x, \mathcal{A}^K)}[q(x, a_l)]$. 
In contrast, CA-PG considers an LSR-propensity discounted reward, where the marginalized LSR propensity $\mathbb{E}_{\pi_{\mathrm{ESR}}(\mathcal{A}^K | x, \mathcal{S}^K(a_l))}[\pi^{(l)}_{\mathrm{LSR}}(a_l | x, \mathcal{A}^K)]$ serves as a discount factor applied to the specific action reward. Second, while V-PG takes the expectation over all possible LSR actions to calculate the expected reward, CA-PG only considers the reward of the target action, $a_l$. This is because $\mathcal{S}^K(a_l)$ has one-to-one correspondence to the final action of interest ($a_l$), while $\mathcal{A}^K$ is a new random variable that is conditionally independent of $a_l$. These differences contribute to the variance reduction of CA-PG, as analyzed in Section~\ref{sec:theoretical}.
This suggests that while V-PG targets the marginal expected reward, CA-PG optimizes a well-defined surrogate reward that facilitates more efficient and robust policy improvement via gradient ascent.

Next, using the above expected reward, we provide the proof of Theorem~\ref{thrm:sufficient} below.

\begin{proof}
As we have discussed in the proof outline in the main text, we need the condition that the expected reward of credit-assigned PG should satisfy $q^{(l)}(x, S^K(a_k)) > q^{(l)}(x, S^K(a_j))$. From Lemma~\ref{lem:expected_reward}, we should have:  
\begin{align*}
    & \, q^{(l)}(x, S^K(a_k)) > q^{(l)}(x, S^K(a_j)) \\
    & \iff \, \pi^{(l)}(a_k | x, \mathcal{S}^K(a_k)) q(x, a_k) > \pi^{(l)}(a_j | x, \mathcal{S}^K(a_j)) q(x, a_j) \\
    & \iff \, \frac{\pi^{(l)}(a_k | x, \mathcal{S}^K(a_k))}{\pi^{(l)}(a_j | x, \mathcal{S}^K(a_j))} > \frac{q(x, a_j)}{q(x, a_k)} \\
    & \iff \, \frac{\pi^{(l)}(a_j | x, \mathcal{S}^K(a_j))}{\pi^{(l)}(a_k | x, \mathcal{S}^K(a_k))} < \frac{q(x, a_k)}{q(x, a_j)}
\end{align*}

where we denote $\pi^{(l)}(a | x, \mathcal{S}^K(a)) = \mathbb{E}_{\pi_{\mathrm{ESR}}(\mathcal{A}^K | x, \mathcal{S}^K(a))}[\pi_{\mathrm{LSR}}^{(l)}(a | x, \mathcal{A}^K)]$.
\end{proof}

\section{Detailed Computation of Score Functions and Sampling Process (in the Plackett-Luce Case)} 

This section discusses the computation of score function $\log \pi_{\mathrm{ESR}}(\mathcal{S}^K(a|x))$ for both CA-PG and CA-PG-SwR, under the Plackett-Luce (MoE) models. Note that, while we do not describe the details here, V-PG and V-PG-SwR also use a stable calculation of the loss function using the \texttt{logsumexp} operation. 

\begin{table*}
  \caption{Empirical approximation error of the action choice probability relative to the ground-truth probability}
  \label{tab:approx_error}
  % \begin{minipage}[c]{0.7\hsize}
  \centering
  \scalebox{0.700}{
  \begin{tabular}{c||c|c|c|c||c|c|c|c||c|c|c|c}
    \toprule
    \multirow{2}{*}{}
    & \multicolumn{4}{c||}{$\tau = 1$ (relatively stochastic)} & \multicolumn{4}{c||}{$\tau = 1/2$} &  \multicolumn{4}{c}{$\tau = 1/5$ (near deterministic)} \\
    \cline{2-13}
    & $K=10$ & $K=20$ & $K=50$ & $K=100$ & $K=10$ & $K=20$ & $K=50$ & $K=100$ & $K=10$ & $K=20$ & $K=50$ & $K=100$ \\
    \toprule
    Relative abs. err. & 2.3e-3	& 9.2e-3 & 2.7e-2 & 5.7e-2 & 4.2e-3 & 1.4e-5 & 5.0e-4 & 4.2e-3 & 0.0 & 0.0 & 0.0 & 4.7e-6 \\ \midrule
    Mean logit gap (top-$K$) & 3.20 & 2.78 & 2.28 & 1.89 & 6.40 & 5.55 & 4.57 & 3.78 & 16.0 & 13.9 & 11.4 & 9.45 \\
    \bottomrule
  \end{tabular}
  }
  \vskip 0.1in
  \raggedright
  \fontsize{9pt}{9pt}\selectfont The reported values represent the absolute approximation error relative to the ground-truth probability incurred when replacing $\mathcal{A}^{k-1}$ with $\overline{\mathcal{A}^{k-1}}$ (Eq.~\eqref{eq:topk_approx}), illustrating the relative magnitude of the error. The logit values of $|\mathcal{A}| = 1000$ of actions are randomly sampled from the standard normal distribution, and we recursively apply softmax on the remaining items with the temperature parameter $\tau$. As the exact ground-truth probability is hard to calculate due to computational complexity, we estimate the ground-truth by Monte-Carlo sampling of 1000 trials. 
\end{table*}

\subsection{Gradient Computation for CA-PG}
\label{app:gradient_computation}
In the actual implementation, we use the $\mathrm{log1p}$ operation to stabilize the loss calculation as follows.
\begin{align*}
    \nabla \log \pi_{\mathrm{ESR}}(\mathcal{S}^K(a_l) | x) 
    &= \nabla \log \pi_{\mathrm{ESR}}(a_l \text{ is selected by } k \text{'th} \,|\, x) \\
    &= \nabla \log (1 - \pi_{\mathrm{ESR}}(a_l \text{ is NOT selected by } k \text{'th} \,|\, x)) \\
    &= \nabla \, \mathrm{log1p} ( - \textstyle \prod_{k=1}^K \pi_{\mathrm{ESR}}(a_l \text{ is NOT selected at } k \text{'th} \,|\, x) ) \\
    &= \nabla \, \mathrm{log1p} ( - \exp ( \textstyle \sum_{k=1}^K \log \pi_{\mathrm{ESR}}(a_l \text{ is NOT selected at } k \text{'th} \,|\, x) ) ) \\
    &= \nabla \, \mathrm{log1p} ( - \exp ( \textstyle \sum_{k=1}^K \log (1 - \pi_{\mathrm{ESR}}(a_l \text{ is selected at } k \text{'th} \,|\, x)) ) ) \\
    &= \nabla \, \mathrm{log1p} ( - \exp ( \textstyle \sum_{k=1}^K \mathrm{log1p} \, (- \pi_{\mathrm{ESR}}(a_l \text{ is selected at } k \text{'th} \,|\, x)) ) )
\end{align*}

Then, we estimate $\pi_{\mathrm{ESR}}(a_l \text{ is selected at } k \text{'th})$ as follows.
\begin{align}
    \pi_{\mathrm{ESR}}(a_l \text{ is selected at } k \text{'th} \,|\, x) 
    &= \textstyle \sum_{\mathcal{A}^{k-1}} P(a_l \text{ is selected at } k \text{'th} \,|\, \mathcal{A}^{k-1}) P(\mathcal{A}^{k-1}) \nonumber \\
    &\approx P(a_l \text{ is selected at } k \text{'th} \,|\, \overline{\mathcal{A}^{k-1}}) \label{eq:topk_approx} \\
    &= \frac{\exp(\texttt{logit}_{(k)}(a_l))}{\sum_{a' \in \mathcal{A}_{(k)} \setminus \overline{\mathcal{A}^{k-1}}} \exp(\texttt{logit}_{(k)}(a'))} \nonumber \\
    &= \frac{\exp(\texttt{logit}_{(k)}(a_l))}{\exp(A_{(k)} + \mathrm{log1p}(- \exp(B_{(k)} - A_{(k)})))} \nonumber \\
    &= \exp(\texttt{logit}_{(k)}(a) - A_{(k)} - \mathrm{log1p}\,(-\exp(B_{(k)} - A_{(k)}))) \nonumber
\end{align}

where $A_{(k)} = \mathrm{logsumexp}_{a' \in \mathcal{A}}(\texttt{logit}_{(k)}(a'))$ and  $B_{(k)} = \mathrm{logsumexp}_{a' \in \overline{\mathcal{A}^{k-1}}}(\texttt{logit}_{(k)}(a'))$.
We also denote $\texttt{logit}_{(k)}(a) = \hat{q}_{\mathrm{ESR}}^{m(k)}(x, a) / \tau$, and $\overline{\mathcal{A}^{k-1}} = \arg\max_{\mathcal{A}^{k-1} \not\ni a_l} P(\mathcal{A}^{k-1})$, which is the most probable $k-1$ candidate set excluding the target action $a_l$. While we approximate the probability of the previously sampled items by ``arg-top-$(k-1)$'', this approximation is often quite accurate. This is because it is often unlikely that lower-ranked items can be sampled first when the item logit distribution exhibits a large logit gap (where the most likely item has a much larger logit than the next best ones, and most logit values are low).
Indeed, Table~\ref{tab:approx_error} shows that the (relative) approximation error of the action choice probability is quite low across different distributions of logit values.
In contrast, when the item logit distribution is near-uniform, the probability of the top-ranked and bottom-ranked items does not differ so much. These results indicate that we can avoid the Monte-Carlo sampling to calculate the score function, reducing the computational time from $\mathcal{O}(|\mathcal{A}|^K)$ to $\mathcal{O}(1)$ in the aforementioned calculation,  with only a small approximation error.

This approximation is a $0$-th order low temperature Taylor expansion. It is exact in the limit $\tau \to 0$, where the probability of the most likely choice approaches $1$, and all other choices approach $0$. Interestingly, this approximation is also exact in the high-temperature limit ($\tau \to \infty$). In this regime, all item and set selections are equally likely; the distributions $P(a_k)$ and $P(\mathcal{A}^{k-1})$ tend to uniform, and therefore the average over $\mathcal{A}^{k-1}$ values becomes an average over a constant argument, which is equal to the argument at any $\mathcal{A}^{k-1}$ and in particular also at $\overline{\mathcal{A}^{k-1}}$.

\subsection{Gradient Computation for CA-PG-SwR}
Next, we implement CA-PG-SwR using \texttt{logsumexp} operation to stabilize the loss calculation as follows.
\begin{align*}
    \nabla \log \pi_{\mathrm{ESR}}(\mathcal{S}^K(a_l) | x) 
    &= \nabla \log \left( \sum_{k = 1}^K \pi_{\mathrm{ESR}}(\mathcal{S}^1(a_l) | x; m(k)) \right) \\
    &= \nabla \log \left( \sum_{k = 1}^K \frac{ \exp(\texttt{logit}_{(k)}(a_l)) }{ \sum_{a' \in \mathcal{A}} \exp(\texttt{logit}_{(k)}(a'))} \right) \\
    &= \nabla \log \left( \sum_{k = 1}^K \frac{ \exp(\texttt{logit}_{(k)}(a_l)) }{ \exp \left( \log \left( \sum_{a' \in \mathcal{A}} \exp(\texttt{logit}_{(k)}(a')) \right) \right)} \right) \\
    &= \nabla \log \left( \sum_{k = 1}^K \frac{ \exp(\texttt{logit}_{(k)}(a_l)) }{ \exp \left( \texttt{logsumexp}_{a' \in \mathcal{A}} \,(\texttt{logit}_{(k)}(a')) \right)} \right) \\
    &= \nabla \log \left( \sum_{k = 1}^K  \exp \left( \texttt{logit}_{(k)}(a_l) - A_{(k)} \right) \right) \\
    &= \nabla \texttt{logsumexp}_{k = 1}^K  \left( \texttt{logit}_{(k)}(a_l) - A_{(k)} \right)
\end{align*}
where we let $A_{(k)} = \mathrm{logsumexp}_{a' \in \mathcal{A}}(\texttt{logit}_{(k)}(a'))$ and $\texttt{logit}_{(k)}(a) = \hat{q}_{\mathrm{ESR}}^{m(k)}(x, a) / \tau$. Note that because $(\texttt{logit}_{(k)}(a_l) - A_{(k)})$ becomes the same among the $k$s that share the same model, the computational time is reduced from $\mathcal{O}(KL)$ to $\mathcal{O}(ML)$, where $K$ is the size of candidates, $M$ is \# of MoE models, and $L$ is the length of LSR's output.

\subsection{Sampling Process of Top-$K$ under Plackett-Luce}
The recursive application of softmax of the Plackett-Luce policy is often modeled by the Gumbel top-$K$ trick for the computational efficiency~\citep{kool2019stochastic}. We also follow this procedure and implement the sampling process as follows.
\begin{itemize}
    \item (Case 1: when using a single logit model) 
    \begin{enumerate}
        \item Sample a value from Gumbel distribution for each action as $\eta_a \sim \mathrm{Gumbel}(0, 1)$.
        \item Retrieve top-$K$ based on the logit value and Gumbel noise as $\mathcal{A}^{K} = \mathrm{arg}\text{-}\mathrm{top}\text{-}K_{a \in \mathcal{A}} \, (\texttt{logit}(a) + \eta_a )$
    \end{enumerate}
    \item (Case 2: when using MoE logit models) 
    \begin{enumerate}
        \item Sample a value from Gumbel distribution for each action as $\eta_a \sim \mathrm{Gumbel}(0, 1)$.
        \item Recursively choose $k$'th item as $a_k = \mathrm{argmax}_{a \in \mathcal{A} \setminus \mathcal{A}^{(k-1)}} \, (\texttt{logit}_{(k)}(a) + \eta_a )$
    \end{enumerate}
\end{itemize}
where $\texttt{logit}_{(k)}(a) = \hat{q}_{\mathrm{ESR}}^{m(k)}(x, a) / \tau$ is the predicted value of action $a$ given context $x$.

\section{Additional Details and Experiment Results} \label{app:experiment_details}

This section provides additional details on the experiment and reports the whole training process. Code is available on a GitHub repository: \href{https://github.com/facebookresearch/early_stage_retrieval
}{https://github.com/facebookresearch/early\_stage\_retrieval}.

\paragraph{Varying alignment of LSR} 
As described in Section~\ref{sec:synthetic_experiment}, we use the following oracle Plackett-Luce policy as the given LSR policy:
\begin{align*}
    \pi_{\mathrm{LSR}}^{(l)}(a_l | x, \mathcal{A}^K) := \frac{\exp(q(x, a_l) / \tau_{\mathrm{LSR}})}{\sum_{a' \in \mathcal{A}^K \setminus a_{1:l-1}} \exp(q(x, a') / \tau_{\mathrm{LSR}})},
\end{align*}
where $q(x, a)$ is the ground-truth expected reward. We can simulate the varying alignment (optimality) of LSR by adjusting the temperature hyperparameter $\tau_{\mathrm{LSR}}$ in each configuration. Specifically, the ``optimal'' policy is modeled by $\tau_{\mathrm{LSR}} \rightarrow +0$, ``anti-optimal'' policy is modeled by $\tau_{\mathrm{LSR}} \rightarrow -0$, and the ``uniform-random'' policy is modeled by $\tau_{\mathrm{LSR}} \rightarrow \pm \infty$ (In implementation, we used the exact argmax and exact uniform random policy for ``optimal'', ``anti'', and ``uniform'', instead of adjusting the temperature hyperparameter of the softmax). The ``noisy-optimal'' policy uses $\tau_{\mathrm{LSR}} = 1.0$ and the top-1 action choice probability of the best action among the candidate was around 0.6 on average.

\paragraph{Learning rates and batch sizes}
We selected the learning rate based on the training stability and policy value at the stopping or convergence point of CA-PG and V-PG on the default setting (i.e., $K=10$, $L=10$, $M=1$, and optimal LSR)\footnote{In a practical situation, the selection of the learning rate based on these results is almost infeasible as we need to run multiple compared method requiring user interaction. However, we employ this selection to ensure fair comparison among the PG methods.}.
Then, we applied the same learning rate to different configurations. This is to see the robustness of each PG method to the varying configurations. The specific learning rate used for CA-PG/CA-PG-SwR is 1e-2, and for V-PG/V-PG-SwR is 1e-1. For both methods, we use a batch size of 128. A large value of the batch size is set to mitigate the variance issues of V-PG.

We also test a simple adjustment of the learning rate depending on the problem instance. First, we observe that the scale of the loss function of V-PG becomes linearly large as $K$ becomes large. Therefore, we consider a small adjustment of the learning rate, in which we normalize the loss scale by multiplying the original learning rate by $K^{-1}$, when we vary the \# of candidates. However, this adjustment further slows down the training process for large values of $K$, and thus, we reject this adjustment. Another possible adjustment is to scale up the learning rate depending on the size of the action space ($|\mathcal{A}^K|$). However, this approach is not tractable, as the scaling factor becomes exponentially large as $K$ grows. In particular, our experimental setting involves an action space of order ($|\mathcal{A}| =$ 1e3), making this approach infeasible. Due to this reason, we reuse the same learning rate across varying sizes of candidates $K$ for V-PGs. 

Next, for CA-PGs, we know from Lemma~\ref{lem:expected_reward} that only CA-PGs optimize the policy with LSR propensity discounted reward using the following policy gradient in expectation (the derivation is provided in Appendix~\ref{proof:sufficient}):
\begin{align*}
    & \text{(CA-PG)}: \; 
    \mathbb{E}_{p(x)\pi_{\mathrm{ESR}}(\mathcal{S}^K(a_l)|x)} \left[ \nabla \log \pi_{\mathrm{ESR}}(\mathcal{S}^K(a_l)) \cdot \left( \underline{\mathbb{E}_{\pi_{\mathrm{ESR}}(\mathcal{A}^K | x, \mathcal{S}^K(a_l))}[\pi^{(l)}_{\mathrm{LSR}}(a_l | x, \mathcal{A}^K)]} \cdot q(x, a_l) \right) \right]
    \\
    & \text{(V-PG)}: \quad \mathbb{E}_{p(x)\pi_{\mathrm{ESR}}(\mathcal{A}^K|x)} \left[ \nabla \log \pi_{\mathrm{ESR}}(\mathcal{A}^K) \cdot \left( \mathbb{E}_{\pi^{(l)}_\mathrm{LSR}(a_l | x, \mathcal{A}^K)}[q(x, a_l)] \right) \right]
\end{align*}
This means that the magnitude of the loss values can change depending on the level of stochasticity of the given LSR policy -- while we have no discount when using a deterministic LSR, we have a large discount in the magnitude of the reward signal when using a uniform random LSR. Therefore, we expect that the training of CA-PG/CA-PG-SwR should be slowed down without scaling with the inverse propensity of LSR's action choice probability from theoretical analysis. To mitigate this reward scaling issue, we recommend the use of an adaptive learning rate, where the learning rate should be scaled proportionally to the LSR propensity to the greedy action as follows.
\begin{align}
    (\text{adaptive\_lr}) = (\text{original\_lr}) \times (\mathbb{E}_{p(x)} [ \text{avg}_{\mathcal{A}^K} (\textstyle \mathbb{E}_{\pi_{\mathrm{LSR}}^{(l)}(a | x, \mathcal{A}^K)} [ \pi^{(l)}_{\mathrm{LSR}}(a | x, \mathcal{A}^K) ] ) ])^{-1}.
    \label{eq:adaptive_lr}
\end{align}
We use this simple adjustment of the learning rate in the synthetic experiment when we vary the alignment (optimality) of LSR. Note that the adaptive rate is consistent across the whole training step, and can be estimated by the Monte-Carlo estimation only once at the beginning of policy training, enabling the simple adjustment of the reward scale, when we use a stochastic and consistently aligned policy for LSR\footnote{A "consistently aligned" policy means that the order of any two items is preserved regardless of the remaining items in the candidates.}. As we continue using the default learning rate of 1e-2, the resulting adaptive learning rates of ``noisy'', ``uniform'', and ``anti'' settings become 2e-2, 1e-1, 1e-2, after rounding the values.

\paragraph{Computational time} 
We compare the computational time of V-PG, V-PG-SwR, CA-PG, and CA-PG-SwR (1) with a single logit model ($M=1$) and varying candidate-set size ($K$) and the length of ranking ($L$), and (2) with MoE logit models ($M$) and varying candidate-set size ($K$) with a single output ($L=1$), in the synthetic setting (Section~\ref{sec:synthetic_experiment}). For this experiment, we use a MacBook Pro (M1, 10 cores, 16GB memory) and run the experiments with CPUs without explicit parallel computation. All implementations are completed by PyTorch~\citep{paszke2019pytorch}.

Figure~\ref{fig:runtime_comparison} reports the results of the computational time comparison. In Figure~\ref{fig:runtime_comparison} (Left), we vary the size of problem instances as $(K, L) \in \{ (5, 1), (10, 1), (20, 1), (20, 5) \}$. The first three settings are to see the changes with varying sizes of candidates ($K$), and the last two settings are to study how the length of ranking ($L$) affects the computational time. 
The results demonstrate that the SwR approximation, including CA-PG-SwR and V-PG-SwR, has a clear benefit over exact calculation of the Plackett-Luce, including V-PG and CA-PG.
% , as our analysis in Section~\ref{sec:plackett_luce} indicates. 
Specifically, while CA-PG and V-PG require more runtime when the configurations ($(K, L)$ and $(K, )$, respectively) increase, the runtime of CA-PG-SwR and V-PG-SwR stays almost constant across varying problem instances. These results indicate that CA-PG-SwR not only reduces the computational order from $\mathcal{O}(KL)$ to $\mathcal{O}(L)$, but the SwR approximation itself contributes to reducing the constant multiplier of the computational time. 

Next, Figure~\ref{fig:runtime_comparison} (Right) compares V-PG-SwR and CA-PG-SwR in a larger problem configurations, e.g., $K \in \{ 50, 100, 200 \}$ and $M \in \{ 1, 5 \}$, with $L=1$. The results indicate that the computational time scales quite similarly in these two methods, where we observe the increase of computational time proportional to \# of MoE models ($M$), while the computational time does not change much from the $K=5$ case (in the right panel) even when we use a large candidate-set size like $K=200$. This indicates that the training-time computational cost of PG methods with the SwR approximation scales well with the candidate-set size ($K$).

From these results, we recommend the use of TOP1-PG (i.e., CA-PG-SwR in the case of $M=1$) in a large-scale search and ranking problem where $L$ becomes large. In contrast, TOP1-PG can limit the ability to make most of MoE, as we have seen in Table~\ref{tab:moe_comparison} in Section~\ref{sec:synthetic_experiment}. This suggests a tradeoff between performance and (training) computational cost of CA-PGs ($M \geq 2$) and TOP1-PG ($M=1$).  

\paragraph{Detailed training processes} We report the training processes of the experiments with (1) varying alignment (optimality) of ESR and (2) varying numbers of outputs ($L$) and MoE models ($M$) in the synthetic setting, and (3) varying candidate sizes ($K$) in the real-data setting in Figures~\ref{fig:lsr_summary}-\ref{fig:kuairec_summary}. 

\paragraph{Additional results combining GRPO} Finally, as discussed in the related work (Section~\ref{sec:related}), we test the performance of both V-PG and CA-PG combined with other variance reduction methods, especially with a prevalent choice of GRPO~\citep{shao2024deepseekmath}. GRPO queries $m$ samples per context and action $r_j(x, a), j \in [m]$, and normalizes the reward as $r'_j = (r_j - \text{mean}(r)) / \text{std}(r)$. While GRPO is not directly applicable to CA-PG due to the reward positivity requirement in Theorem~\ref{thrm:sufficient}, we can easily integrate this reward normalization to CA-PG by adding a positive constant ($c$) to the reward as $(r'_j + c)$.\footnote{For V-PG, we tested the performance on both of the following two situations: (1) not adding a constant (the same as original GRPO), and (2) adding a constant (the same modification as CA-PG), and report the best results (1) among these two choices.} Note that, implementing GRPO is unrealistic in the recommender systems (RecSys) setting, as GRPO needs to query rewards for multiple rankings (i.e., $m$ different samples per context). This is because, unlike LLM training, which can retrieve multiple rewards from the reward simulator, in RecSys, oftentimes only a single reward feedback is available through online interaction with users. However, the main objective of this additional experiment is to provide evidence that CA-PG can be easily integrated into other variance reduction methods of RL policies.

Table~\ref{tab:grpo_result} reports the expected reward of the policies learned by (1) V-PG-SwR + GRPO and (2) CA-PG-SwR + GRPO. To compare the results in a practical situation, we use five MoE base models ($M=5$) and vary the candidate-set size as $K \in \{ 20, 50, 100, 200 \}$. The results show that both V-PG-SwR and CA-PG-SwR improve the convergence speed combined with GRPO, and the difference between the V-PG-SwR and CA-PG-SwR becomes small when comparing the expected reward at 50K gradient steps. However, CA-PG-SwR + GRPO achieves a better performance than V-PG-SwR in the large candidate-set size regime, suggesting that the policy gradient methods can gain benefits from both GRPO-type variance reduction (i.e., reward normalization) and CA-PG-type variance reduction (i.e., efficient marginalized action choice probability calculation).

\begin{table}
  \caption{Comparing PG methods combined with GRPO~\citep{shao2024deepseekmath} (i.e., other variance reduction methods in RL)}
  \label{tab:grpo_result}
  \centering
  \scalebox{0.750}{
  \begin{tabular}{c||c|c|c|c}
    \toprule
    \multirow{2}{*}{PG methods}
    & \multicolumn{4}{c}{\# of candidates ($K$)} \\
    \cline{2-5}
    & 20 & 50 & 100 & 200 \\
    \toprule
    V-PG-SwR + GRPO & \textcolor{dkgreen}{\textbf{\underline{8.71}}} {\small ($\pm$ 0.03)} & 8.84 {\small ($\pm$ 0.01)} & 8.93 {\small ($\pm$ 0.02)} & 9.00 {\small ($\pm$ 0.02)} \\
    (baseline) & $\rightarrow$ 8.88 {\small ($\pm$ 0.03)} & $\rightarrow$ 8.98 {\small ($\pm$ 0.03)} & $\rightarrow$ 9.02 {\small ($\pm$ 0.03)} & $\rightarrow$ 9.06 {\small ($\pm$ 0.03)} \\ \midrule
    \textbf{CA-PG-SwR + GRPO} & 8.58 {\small ($\pm$ 0.03)} & \textcolor{dkgreen}{\textbf{\underline{8.89}}} {\small ($\pm$ 0.02)} & \textcolor{dkgreen}{\textbf{\underline{9.02}}} {\small ($\pm$ 0.02)} & \textcolor{dkgreen}{\textbf{\underline{9.08}}} {\small ($\pm$ 0.02)} \\
    (ours) & $\rightarrow$ \textcolor{dkgreen}{\textbf{\underline{8.97}}} {\small ($\pm$ 0.03)} & $\rightarrow$ \textcolor{dkgreen}{\textbf{\underline{9.07}}} {\small ($\pm$ 0.03)} & $\rightarrow$ \textcolor{dkgreen}{\textbf{\underline{9.10}}} {\small ($\pm$ 0.02)} & $\rightarrow$ \textcolor{dkgreen}{\textbf{\underline{9.11}}} {\small ($\pm$ 0.03)} \\
    \bottomrule
  \end{tabular}
  }
  \vskip 0.1in
  \raggedright
  \fontsize{9pt}{9pt}\selectfont 
  % The \textcolor{dkred}{\textbf{red}} font represents the worst performance, and 
  The \textcolor{dkgreen}{\textbf{\underline{green}}} font shows the best performance. The policy value is measured at either @50K (top) or @500K (bottom) gradient steps. The results are based on 10 random seeds.  
\end{table}

\newpage

\begin{figure}
\centering
\includegraphics[clip, width=0.70\linewidth]{figs/result/label.png}
\includegraphics[clip, width=0.95\linewidth]{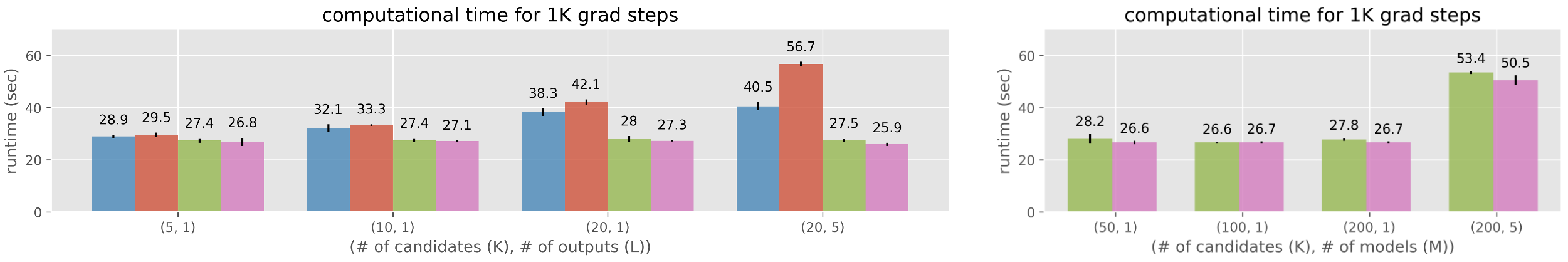}
  % \vspace{-3mm}
  \caption{\textbf{Comparison of the computational time for running 1K gradient steps.} We ran each PG method for 10 random seeds. The result reports the mean and standard deviation of the runtime, excluding trials in which gradient overflow occurred. The left figure uses $M=1$ (i.e., single logit model), and the right figure uses $L=1$ (i.e., final ranking length is 1).}
  \label{fig:runtime_comparison}
\end{figure}

\begin{figure*}
\centering
\includegraphics[clip, width=0.70\linewidth]{figs/result/label.png}
\includegraphics[clip, width=0.98\linewidth]{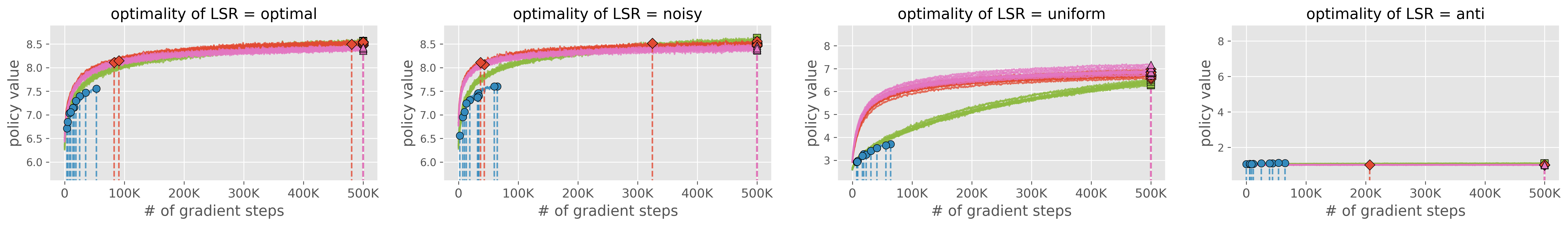}
  % \vspace{-3mm}
  \caption{\textbf{Comparing the training process of each PG method with varying alignment (optimality) of LSR.}}
  \label{fig:lsr_summary}
\end{figure*}

\begin{figure*}
\centering
\includegraphics[clip, width=0.70\linewidth]{figs/result/label.png}
\includegraphics[clip, width=0.98\linewidth]{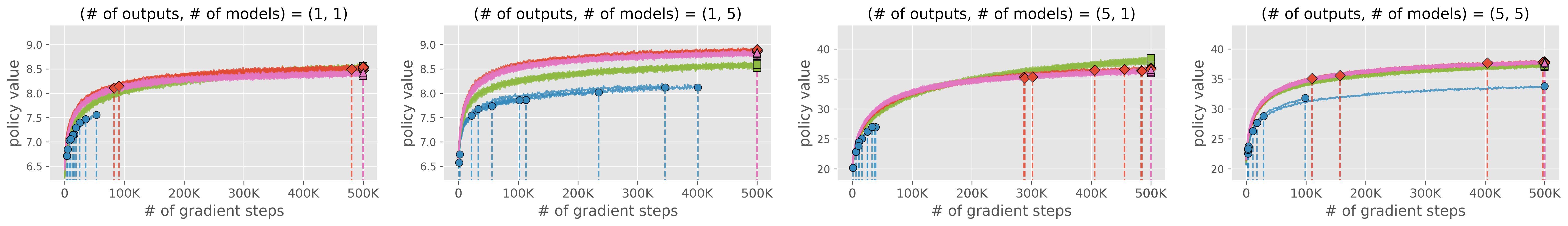}
  % \vspace{-3mm}
  \caption{\textbf{Comparing the training process of each PG method with varying \# of outputs ($L$) and MoE models ($M$).}}
  \label{fig:moe_summary}
\end{figure*}

\begin{figure*}
\centering
\includegraphics[clip, width=0.40\linewidth]{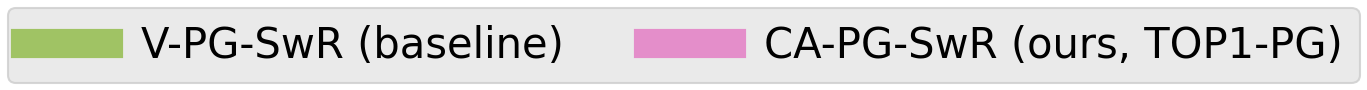}
\includegraphics[clip, width=0.98\linewidth]{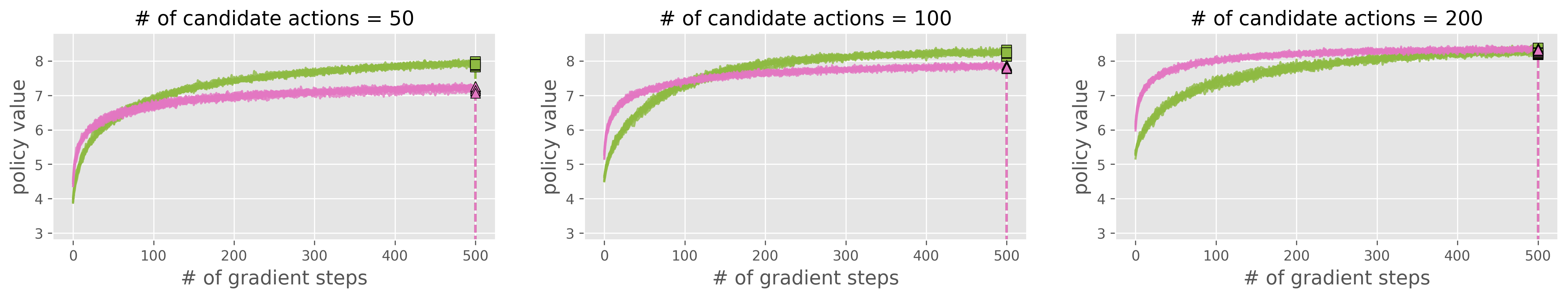}
  % \vspace{-3mm}
  \caption{\textbf{Comparing the training process of each PG method with varying candidate sizes ($K$) in the real-data experiment.}}
  \label{fig:kuairec_summary}
\end{figure*}

\end{document}